\newtheorem{theorem}{Theorem}
\newtheorem{corollary}[theorem]{Corollary}
\newtheorem{problem}{Problem}
\newenvironment{proof}[1][Proof.]{\begin{trivlist}
\item[\hskip \labelsep {\bfseries #1}]}{\end{trivlist}}
\newenvironment{acknowledgement}[1][Acknowledgement]{\begin{trivlist}
\item[\hskip \labelsep {\bfseries #1}]}{\end{trivlist}}
\newcommand{\AmS}{{\protect\the\textfont2
  A\kern-.1667em\lower.5ex\hbox{M}\kern-.125emS}}
\title{Interval edge colorings of some products of graphs}
\author{Petros A. Petrosyan\address[MCSD]{Institute for Informatics and Automation Problems, National Academy of
Sciences, 0014, Armenia}%
\address{Department of Informatics and Applied Mathematics, Yerevan State
University, 0025, Armenia} \thanks {email:
pet\_petros@\{ipia.sci.am, ysu.am, yahoo.com\}}}
\begin{document}

\maketitle

\begin{abstract}
An edge coloring of a graph $G$ with colors $1,2,\ldots ,t$ is
called an interval $t$-coloring if for each $i\in \{1,2,\ldots,t\}$
there is at least one edge of $G$ colored by $i$, and the colors of
edges incident to any vertex of $G$ are distinct and form an
interval of integers. A graph $G$ is interval colorable, if there is
an integer $t\geq 1$ for which $G$ has an interval $t$-coloring. Let
$\mathfrak{N}$ be the set of all interval colorable graphs.
 In 2004 Kubale and Giaro showed that if $G,H\in \mathfrak{N}$, then the
 Cartesian product of these graphs belongs to $\mathfrak{N}$.
Also, they formulated a similar problem for the lexicographic
product as an open problem. In this paper we first show that if
$G\in \mathfrak{N}$, then $G[nK_{1}]\in \mathfrak{N}$ for any $n\in
\mathbf{N}$. Furthermore, we show that if $G,H\in \mathfrak{N}$ and
$H$ is a regular graph, then strong and lexicographic products of
graphs $G,H$ belong to $\mathfrak{N}$. We also prove that tensor and
strong tensor products of graphs $G,H$ belong to $\mathfrak{N}$ if
$G\in \mathfrak{N}$ and $H$ is a regular graph.\\

Keywords: edge coloring, interval coloring, products of graphs\\

AMS Subject Classification: 05C15\\
\end{abstract}

\section{Introduction}\

An edge coloring of a graph $G$ with colors $1,2,\ldots ,t$ is
called an interval $t$-coloring if for each $i\in \{1,2,\ldots,t\}$
there is at least one edge of $G$ colored by $i$, and the colors of
edges incident to any vertex of $G$ are distinct and form an
interval of integers. Interval edge colorings naturally arise in
scheduling problems and are related to the problem of constructing
timetables without \textquotedblleft gaps\textquotedblright for
teachers and classes. The notion of interval edge colorings was
introduced by Asratian and Kamalian \cite{b1} in 1987. In \cite{b1}
they proved that if a triangle-free graph $G=\left(V,E\right)$ has
an interval $t$-coloring, then $t\leq \left\vert V\right\vert -1$.
In \cite{b19} interval edge colorings of complete bipartite graphs
and trees were investigated. Furthermore, Kamalian \cite{b20} showed
that if $G$ admits an interval $t$-coloring, then $t\leq 2\left\vert
V\right\vert -3$. Giaro, Kubale and Malafiejski \cite{b12} proved
that this upper bound can be improved to $2\left\vert V\right\vert
-4$ if $\left\vert V\right\vert \geq 3$. For a planar graph $G$,
Axenovich \cite{b5} showed that if $G$ has an interval $t$-coloring,
then $t\leq \frac{11}{6}\left\vert V\right\vert$. In general, it is
an $NP$-complete problem to decide whether a given bipartite graph
$G$ admits an interval edge coloring \cite{b35}. In papers
\cite{b2,b4,b5,b7,b8,b9,b10,b11,b12,b13,b14,b15,b16,b19,b20,b21,b22,b23,b29,b30,b32,b34}
the problem of existence and construction of interval edge colorings
was considered and some bounds for the number of colors in such
colorings of some classes of graphs were given. Surveys on this
topic can be found in some books \cite{b3,b18,b25}.

The different products of graphs were introduced by Berge \cite{b6},
Sabidussi \cite{b33} and Vizing \cite{b36}. There are many papers
\cite{b17,b24,b26,b27,b28,b31,b38} devoted to edge colorings of
various products of graphs. In this paper we investigate interval
edge colorings of various products of graphs.

\bigskip

\section{Definitions and preliminary results}\

All graphs considered in this paper are finite, undirected and have
no loops or multiple edges. Let $V(G)$ and $E(G)$ denote the sets of
vertices and edges of $G$, respectively. The maximum degree of a
vertex of $G$ is denoted by $\Delta (G)$ and the chromatic index of
$G$ by $\chi ^{\prime }\left( G\right)$. A partial edge coloring of
$G$ is a coloring of some of the edges of $G$ such that no two
adjacent edges receive the same color. If $\alpha $ is a partial
edge coloring of $G$ and $v\in V(G)$ then $S\left( v,\alpha \right)$
denotes the set of colors of colored edges incident to $v$.

A graph $G$ is interval colorable, if there is an integer $t\geq 1$,
for which $G$ has an interval $t$-coloring. Let $\mathfrak{N}$ be
the set of all interval colorable graphs \cite{b1,b20}. For a graph
$G\in \mathfrak{N}$, the least and the greatest values of $t$ for
which $G$ has an interval $t$-coloring are denoted by $w\left(
G\right)$ and $W\left(G\right)$, respectively.

Let $G=(V(G),E(G))$ and $H=(V(H),E(H))$ be two graphs.

The Cartesian product $G\square H$ is defined as follows:
\begin{center}
$V(G\square H)=V(G)\times V(H)$, $E(G\square
H)=\{((u_{1},v_{1}),(u_{2},v_{2}))|$

$u_{1}=u_{2}~and~(v_{1},v_{2})\in E(H)~or~v_{1}=v_{2}~and~
(u_{1},u_{2})\in E(G)\}$.
\end{center}

The tensor (direct) product $G\times H$ is defined as follows:
\begin{center}
$V(G\times H)=V(G)\times V(H)$,

$E(G\times H)=\{((u_{1},v_{1}),(u_{2},v_{2}))|~(u_{1},u_{2})\in
E(G)~and~(v_{1},v_{2})\in E(H)\}$.
\end{center}

The strong tensor (semistrong) product $G\otimes H$ is defined as
follows:
\begin{center}
$V(G\otimes H)=V(G)\times V(H)$, $E(G\otimes
H)=\{((u_{1},v_{1}),(u_{2},v_{2}))|$

$~(u_{1},u_{2})\in E(G)~and~(v_{1},v_{2})\in
E(H)~or~v_{1}=v_{2}~and~(u_{1},u_{2})\in E(G)\}$.
\end{center}

The strong product $G\boxtimes H$ is defined as follows:
\begin{center}
$V(G\boxtimes H)=V(G)\times V(H)$, $E(G\boxtimes
H)=\{((u_{1},v_{1}),(u_{2},v_{2}))|~(u_{1},u_{2})\in E(G)$

$~and~(v_{1},v_{2})\in E(H)~or~u_{1}=u_{2}~and~(v_{1},v_{2})\in
E(H)~or~v_{1}=v_{2}~and~ (u_{1},u_{2})\in E(G)\}$.
\end{center}

The lexicographic product (composition) $G[H]$ is defined as
follows:
\begin{center}
$V(G[H])=V(G)\times V(H)$,

$E(G[H])=\{((u_{1},v_{1}),(u_{2},v_{2}))|~(u_{1},u_{2})\in
E(G)~or~u_{1}=u_{2}~and~(v_{1},v_{2})\in E(H)\}$.
\end{center}

The terms and concepts that we do not define can be found in
\cite{b37}.\\

Asratian and Kamalian proved the following:
\begin{theorem}
\label{mytheorem1}\cite{b1}. Let $G$ be a regular graph. Then

\begin{description}
\item[(1)] $G\in \mathfrak{N}$ if and only if $\chi ^{\prime}(G)=\Delta(G)$.

\item[(2)] If $G\in \mathfrak{N}$ and $\Delta(G)\leq t\leq W(G)$, then $G$
has an interval $t$-coloring.
\end{description}
\end{theorem}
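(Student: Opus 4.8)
The plan is to treat the two parts separately, writing $r$ for the common degree of the regular graph $G$, so that $\Delta(G)=r$ and $G$ is $r$-regular.

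For the forward direction of (1), suppose $\alpha$ is an interval $t$-coloring of $G$. At every vertex $v$ the set $S(v,\alpha)$ of colors of incident edges is an interval $[a_v, a_v+r-1]$ of exactly $r$ consecutive integers, since $\deg(v)=r$. The key observation is that $r$ consecutive integers represent all residues modulo $r$ exactly once. Hence I would define a new coloring $\beta$ by reducing each color modulo $r$, i.e.\ $\beta(e)\equiv\alpha(e)\pmod r$ with values taken in $\{1,\dots,r\}$; at each vertex the $r$ incident edges then receive $r$ pairwise distinct colors, so $\beta$ is a proper edge coloring with $r$ colors and $\chi'(G)\le r$. Together with the trivial bound $\chi'(G)\ge\Delta(G)=r$ this gives $\chi'(G)=\Delta(G)$. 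For the converse, assume $\chi'(G)=r$ and fix a proper $r$-edge-coloring with colors $1,\dots,r$. Because $G$ is $r$-regular, the $r$ edges at each vertex must exhaust $\{1,\dots,r\}$, so the color set at each vertex is the interval $\{1,\dots,r\}$ and every color is used; this proper coloring is therefore already an interval $r$-coloring, whence $G\in\mathfrak N$. A byproduct worth recording is that $w(G)=\Delta(G)$ for every regular $G\in\mathfrak N$.

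For (2) the aim is a single reduction step: if $G$ has an interval $t$-coloring with $t>\Delta(G)=r$, then $G$ has an interval $(t-1)$-coloring. Given such a coloring $\alpha$, consider the edges colored $t$; they form a matching, and both endpoints $v$ of such an edge have $t$ as the top of their interval, hence by regularity their color sets coincide with the common interval $[t-r+1,t]$. Here regularity is essential: it forces the two endpoints of a top-colored edge to share the same interval, so a single recoloring can be applied consistently. I would recolor every edge of color $t$ with color $t-r$, which is legitimate since $t>r$ gives $t-r\ge 1$. At each affected vertex the color set then changes from $[t-r+1,t]$ to $[t-r,t-1]$, still an interval of size $r$, while at every other vertex nothing changes.

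The main point to verify is that this recoloring stays proper and still uses every color in $\{1,\dots,t-1\}$. For properness, two recolored edges are nonadjacent (they came from a matching), and if a recolored edge met an edge originally colored $t-r$ at some vertex $w$, then $w$ would be incident to a color-$t$ edge and thus have color set $[t-r+1,t]$, which excludes $t-r$, a contradiction. For color usage, the colors $1,\dots,t-1$ are untouched and hence remain present, while color $t$ disappears, so the result is an interval $(t-1)$-coloring. Iterating this reduction from an interval $W(G)$-coloring down through $W(G)-1,W(G)-2,\dots,\Delta(G)$ then yields an interval $t$-coloring for every $t$ with $\Delta(G)\le t\le W(G)$, which is exactly (2). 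The only delicate point throughout is maintaining properness and the interval property simultaneously under recoloring, and regularity is precisely what removes this obstruction by equalizing the endpoint intervals.
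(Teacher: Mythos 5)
The paper does not prove this statement at all: it is quoted from Asratian and Kamalian \cite{b1} as a known result, so there is no in-paper argument to compare against. Your proof is correct and is essentially the classical one for this theorem: reducing colors modulo $r$ gives the forward direction of (1), the converse is immediate from regularity, and for (2) your observation that every vertex meeting a color-$t$ edge has palette exactly $[t-r+1,t]$ (because $t$ is the global maximum and the graph is $r$-regular) makes the shift of the top color class from $t$ to $t-r$ both proper and interval-preserving, so the downward induction from $W(G)$ goes through.
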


\begin{corollary}
\label{mycorollary1} If $G$ is an $r$-regular bipartite graph, then
$G\in \mathfrak{N}$ and $w(G)=r$.
\end{corollary}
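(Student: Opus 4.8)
The plan is to reduce the statement to Theorem~\ref{mytheorem1} by supplying the one missing ingredient, namely that an $r$-regular bipartite graph has chromatic index exactly $r$. First I would invoke König's edge-coloring theorem, which asserts that every bipartite graph $G$ satisfies $\chi^{\prime}(G)=\Delta(G)$. For an $r$-regular bipartite graph we have $\Delta(G)=r$, hence $\chi^{\prime}(G)=r=\Delta(G)$, and Theorem~\ref{mytheorem1}(1) immediately yields $G\in\mathfrak{N}$.

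It then remains to pin down the least number of colors $w(G)$. For the lower bound, note that any proper edge coloring of $G$ must assign $r$ distinct colors to the $r$ edges meeting a fixed vertex, so no interval coloring can use fewer than $r$ colors; thus $w(G)\geq r$. For the upper bound I would observe that a proper $r$-edge-coloring is automatically an interval coloring when $G$ is $r$-regular: at every vertex the $r$ incident edges carry $r$ distinct colors drawn from $\{1,\ldots,r\}$, so the set of colors seen at that vertex is exactly $\{1,2,\ldots,r\}$, which is an interval. Since $\chi^{\prime}(G)=r$ furnishes such a proper coloring, $G$ has an interval $r$-coloring and $w(G)\leq r$. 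Combining the two bounds gives $w(G)=r$.

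There is no real obstacle here: the whole content is carried by König's theorem together with the elementary remark that in a regular graph an optimal proper edge coloring necessarily exhausts all colors at each vertex, turning the color set at each vertex into a full interval. The only point requiring a word of care is the upper bound, where one must verify that the interval condition holds automatically rather than having to be engineered by hand.
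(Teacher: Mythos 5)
Your proof is correct and follows the same route the paper intends for this corollary: K\"onig's theorem gives $\chi^{\prime}(G)=\Delta(G)=r$, Theorem~\ref{mytheorem1}(1) then yields $G\in\mathfrak{N}$, and the observation that in an $r$-regular graph a proper $r$-edge-coloring makes every vertex see exactly the full interval $\{1,\ldots,r\}$ settles $w(G)=r$. Nothing is missing.
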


Kubale and Giaro proved the following:
\begin{theorem}
\label{mytheorem2}\cite{b25}. If $G,H\in \mathfrak{N}$, then
$G\square H\in \mathfrak{N}$. Moreover, $w(G\square H)\leq
w(G)+w(H)$ and $W(G\square H)\geq W(G)+W(H)$.
\end{theorem}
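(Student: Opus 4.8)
The plan is to build an interval coloring of $G\square H$ directly from interval colorings of the factors, exploiting the fact that $G\square H$ is the edge-disjoint union of $|V(H)|$ copies of $G$ (one for each fixed second coordinate) and $|V(G)|$ copies of $H$ (one for each fixed first coordinate), every edge lying in exactly one such copy. Fix an interval $t_1$-coloring $\alpha$ of $G$ and an interval $t_2$-coloring $\beta$ of $H$. For $u\in V(G)$ write $a_u=\min S(u,\alpha)$ and $b_u=\max S(u,\alpha)$, and for $v\in V(H)$ write $c_v=\min S(v,\beta)$ and $d_v=\max S(v,\beta)$; since $\alpha,\beta$ are interval colorings, $S(u,\alpha)$ and $S(v,\beta)$ are the integer intervals $[a_u,b_u]$ and $[c_v,d_v]$. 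The idea is to recolor each copy of $G$ by a shifted $\alpha$ and each copy of $H$ by a shifted $\beta$, choosing the shifts so that at every vertex the two resulting intervals abut.

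Concretely, I would color the $G$-edge $((u_1,v),(u_2,v))$ by $\alpha(u_1u_2)+c_v-1$ and the $H$-edge $((u,v_1),(u,v_2))$ by $\beta(v_1v_2)+b_u$. Each shift depends only on the coordinate shared by the two endpoints of the edge, so the coloring is well defined. At a vertex $(u,v)$ the incident $G$-edges then receive exactly the colors $[a_u+c_v-1,\,b_u+c_v-1]$ and the incident $H$-edges receive exactly $[b_u+c_v,\,b_u+d_v]$; these two intervals are consecutive and glue into the single interval $[a_u+c_v-1,\,b_u+d_v]$. A length count gives $(b_u-a_u+1)+(d_v-c_v+1)=d_G(u)+d_H(v)=d_{G\square H}(u,v)$ colors, so the colors at $(u,v)$ are distinct and form an interval; hence the construction is a proper edge coloring in which every vertex sees an interval.

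It remains to show that the set of colors actually used is $\{1,2,\ldots,M\}$ for the appropriate $M$. Color $1$ occurs (take $u,v$ with $a_u=c_v=1$, which exist because $\alpha,\beta$ use color $1$), and the largest color is $\max_{u,v}(b_u+d_v)=t_1+t_2$, realized by the $H$-edge at a vertex $(u,v)$ with $b_u=t_1$ and $d_v=t_2$. To rule out gaps I would use connectivity: assuming $G$ and $H$ connected (the general case reduces to components), any two adjacent vertices of $G\square H$ share the color of their joining edge, so their color intervals overlap; propagating this along a spanning connected subgraph shows that the union of all per-vertex intervals is itself a single interval, necessarily $[1,M]$. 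Feeding in a $w(G)$-coloring and a $w(H)$-coloring makes $M=w(G)+w(H)$, giving $w(G\square H)\le w(G)+w(H)$; feeding in a $W(G)$-coloring and a $W(H)$-coloring makes $M=W(G)+W(H)$, giving $W(G\square H)\ge W(G)+W(H)$; in either case $G\square H\in\mathfrak{N}$.

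The per-vertex verifications are routine. The main obstacle is the global claim that no intermediate color is skipped and that $M$ takes the asserted value, together with the bookkeeping required when $G$ or $H$ is disconnected, where the color ranges produced on different components must be aligned so that the whole graph still uses a contiguous block of colors.
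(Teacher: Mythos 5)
The paper does not prove this theorem --- it is quoted from Kubale's book \cite{b25} --- so there is no internal proof to compare against; judged on its own, your construction is correct and is in fact the standard argument for this result. The key computation checks out: at a vertex $(u,v)$ the $G$-edges receive exactly $[a_u+c_v-1,\,b_u+c_v-1]$ and the $H$-edges exactly $[b_u+c_v,\,b_u+d_v]$, two abutting intervals, and both shifts depend only on the coordinate shared by the edge's endpoints, so the coloring is well defined and proper. The one ``obstacle'' you flag --- gaps in the global spectrum when $G$ or $H$ is disconnected --- actually dissolves without any realignment: a component $G_i\square H_j$ uses precisely the interval $[l_i+l'_j-1,\,r_i+r'_j]$, where $[l_i,r_i]$ and $[l'_j,r'_j]$ are the color ranges of $\alpha$ on $G_i$ and of $\beta$ on $H_j$, and since $\bigcup_i[l_i,r_i]=[1,t_1]$ and $\bigcup_j[l'_j,r'_j]=[1,t_2]$, a short case analysis (take $x=c$, $y=1$ when $c\le t_1$, and $r_i=t_1$, $y=c-t_1$ otherwise) shows $\bigcup_{i,j}[l_i+l'_j-1,\,r_i+r'_j]=[1,t_1+t_2]$. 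So your proof is complete as written once that check is added; the only genuinely degenerate case left is an isolated vertex of $G$, for which $b_u$ is undefined, and which one handles by convention or by discarding isolated vertices.
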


The $k$-dimensional grid $G(n_{1},n_{2},\ldots,n_{k})$, $n_{i}\in
\mathbf{N}$ is the Cartesian product of paths $P_{n_{1}}\square
P_{n_{2}}\square \cdots \square P_{n_{k}}$. The cylinder
$C(n_{1},n_{2})$ is the Cartesian product $P_{n_{1}}\square
C_{n_{2}}$ and the torus $T(n_{1},n_{2})$ is the Cartesian product
$C_{n_{1}}\square C_{n_{2}}$, where $C_{n_{i}}$ is the cycle of
length $n_{i}$. For these graphs Kubale and Giaro proved the
following:

\begin{theorem}
\label{mytheorem3}\cite{b10}. If $G=G(n_{1},n_{2},\ldots,n_{k})$ or
$G=C(m,2n)$, $m\in \mathbf{N}$, $n\geq 2$, or $G=T(2m,2n)$, $m,n\geq
2$, then $G\in \mathfrak{N}$ and $w(G)=\Delta (G)$.
\end{theorem}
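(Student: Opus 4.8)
The plan is to reduce all three families to a single observation about how the Cartesian product interacts with the invariant $w$. The starting point is the elementary lower bound $w(G)\geq \Delta(G)$, valid for every $G\in\mathfrak{N}$: an interval coloring is in particular a proper edge coloring, so it uses at least $\chi'(G)\geq\Delta(G)$ colors, and $w(G)$ is the least admissible number of colors. Combined with the fact that $\deg_{G\square H}(u,v)=\deg_G(u)+\deg_H(v)$, hence $\Delta(G\square H)=\Delta(G)+\Delta(H)$, and with the upper bound $w(G\square H)\leq w(G)+w(H)$ from Theorem~\ref{mytheorem2}, this yields the following squeeze: if $G,H\in\mathfrak{N}$ satisfy $w(G)=\Delta(G)$ and $w(H)=\Delta(H)$, then
\[
\Delta(G\square H)\leq w(G\square H)\leq w(G)+w(H)=\Delta(G)+\Delta(H)=\Delta(G\square H),
\]
so $G\square H\in\mathfrak{N}$ and $w(G\square H)=\Delta(G\square H)$. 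The whole theorem then follows once I check that each elementary factor — a path or an even cycle — already satisfies $w=\Delta$.

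For the factors I would argue as follows. A path $P_n$ with $n\geq 2$ admits the explicit interval coloring obtained by coloring its edges alternately $1,2,1,2,\ldots$; this uses $\Delta(P_n)$ colors and the colors at every vertex form an interval, so $w(P_n)=\Delta(P_n)$. An even cycle $C_{2n}$ with $n\geq 2$ is $2$-regular and bipartite, so Corollary~\ref{mycorollary1} gives $C_{2n}\in\mathfrak{N}$ and $w(C_{2n})=2=\Delta(C_{2n})$.

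With these base facts in hand the three cases follow. For the grid $G(n_1,\ldots,n_k)=P_{n_1}\square\cdots\square P_{n_k}$ I would induct on $k$: the base case is a single path, and the inductive step applies the squeeze above to $\big(P_{n_1}\square\cdots\square P_{n_{k-1}}\big)\square P_{n_k}$, both factors of which satisfy $w=\Delta$ by the induction hypothesis and the path computation. The cylinder $C(m,2n)=P_m\square C_{2n}$ and the torus $T(2m,2n)=C_{2m}\square C_{2n}$ then follow by a single application of the squeeze to the relevant pair of factors.

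The argument is almost entirely bookkeeping once Theorem~\ref{mytheorem2} is granted; the one point that must not be glossed over is why the hypotheses force the cycle lengths to be even. Odd cycles are excluded precisely because $\chi'(C_{2n+1})=3>2=\Delta(C_{2n+1})$, so by Theorem~\ref{mytheorem1}(1) they lie outside $\mathfrak{N}$ and cannot serve as factors; the evenness in $C(m,2n)$ and $T(2m,2n)$ is exactly what makes Corollary~\ref{mycorollary1} applicable. Thus the only genuine subtlety is verifying that each specified factor meets the $w=\Delta$ hypothesis, after which the squeeze does all the remaining work.
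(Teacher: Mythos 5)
Your argument is correct. Note, however, that the paper does not prove this statement at all: Theorem~\ref{mytheorem3} is quoted from the Giaro--Kubale reference \cite{b10}, where it is established by explicit constructions of interval $\Delta$-colorings for grids, bipartite cylinders and bipartite tori. What you have done instead is derive it as a formal consequence of two other results already present in the paper, and the derivation holds up: the squeeze $\Delta(G\square H)\leq w(G\square H)\leq w(G)+w(H)=\Delta(G)+\Delta(H)=\Delta(G\square H)$ is valid because $w\geq\chi'\geq\Delta$ for any interval colorable graph, degrees add in the Cartesian product, and Theorem~\ref{mytheorem2} supplies both membership in $\mathfrak{N}$ and the upper bound. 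Your base cases are also right: paths satisfy $w=\Delta$ by the alternating $1,2,1,2,\ldots$ coloring, even cycles by Corollary~\ref{mycorollary1}, and your remark that odd cycles are excluded because $\chi'(C_{2n+1})=3>\Delta$ correctly explains the parity hypotheses. What your route buys is economy --- everything reduces to the single product inequality --- at the cost of depending on Theorem~\ref{mytheorem2}, which is itself a nontrivial construction; the original proof in \cite{b10}, being constructive and self-contained, additionally exhibits the colorings explicitly. The only loose end worth a sentence in a written version is the degenerate case of a trivial factor (e.g.\ $n_{i}=1$ in the grid, or $m=1$ in the cylinder), where the product collapses to the remaining factors and the claim holds vacuously.
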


For the greatest possible number of colors in interval edge
colorings of grid graphs Petrosyan and Karapetyan proved the
following theorems:

\begin{theorem}
\label{mytheorem4}\cite{b29}. If $G=C(m,2n)$, $m\in \mathbf{N}$,
$n\geq 2$, then $W(G)\geq 3m+n-2$.
\end{theorem}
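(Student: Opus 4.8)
The plan is to prove the bound constructively: I will exhibit an explicit interval $(3m+n-2)$-coloring of $C(m,2n)=P_m\square C_{2n}$. Note first that the Cartesian-product estimate of Theorem~\ref{mytheorem2} is far too weak for this purpose, since it only bounds $W(C(m,2n))$ below by $W(P_m)+W(C_{2n})$, a quantity of order $m+n$; so the extra term $2m-3$ must come from a coloring tailored to the cylinder. I set coordinates $(i,j)$ with $1\le i\le m$ (the $P_m$-direction, the ``levels'') and $j\in\mathbf{Z}_{2n}$ (the $C_{2n}$-direction), writing $e^{v}_{i,j}$ for the path edge joining $(i,j)$ to $(i+1,j)$ and $e^{h}_{i,j}$ for the cycle edge joining $(i,j)$ to $(i,j+1)$.

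The guiding idea is that an interior vertex has degree $4$, so the interval of colors at such a vertex has length $4$ and can shift upward by as much as $3$ between consecutive levels. I therefore let the vertical edges climb steeply: $c(e^{v}_{i,j})$ is prescribed to increase by $3$ as $i$ increases by $1$, so that in each column the two vertical edges at an interior vertex occupy the two extreme colors of its interval; this alone accounts for roughly $3(m-1)$ colors. Superimposed on it, the horizontal edges are colored to follow the ``mountain'' profile $B(j)=\min(j,\,2n-j)$ around the even cycle: on the up-slope $0\le j\le n$ the whole interval is pushed up by one at each step, and on the down-slope $n\le j\le 2n$ it is pulled back down, the two inner colors of each length-$4$ interval being supplied by the two incident horizontal edges. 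This contributes the additional $\approx n$ colors, and a direct substitution shows that at every interior up-slope or down-slope vertex the four incident colors are four consecutive integers and that each shared edge lies in the intervals of both of its endpoints.

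Then I would verify the boundary and ``turning'' cases, which is where the argument is delicate. The levels $i=1$ and $i=m$ consist of degree-$3$ vertices, so their intervals have length $3$; here the single vertical edge must be placed at the correct end of the interval so that the length-$3$ window glues consistently onto the length-$4$ windows of the adjacent interior level. More seriously, at the peak column $j=n$ and at the valley/wrap-around column $j=0$ the mountain reverses direction, and there the naive up-slope and down-slope prescriptions collide: both horizontal edges at such a vertex are forced toward the same inner color. These columns require a local modification of the coloring, and it is precisely the height lost in resolving the peak that turns the optimistic value $3m+n$ into the stated $3m+n-2$. I expect this peak/valley gluing, together with the interaction of the turning columns with the two boundary levels (the four ``corner'' regions $i\in\{1,m\}$, $j\in\{0,n\}$), to be the main obstacle.

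Finally, I would confirm that the coloring is surjective onto $\{1,2,\ldots,3m+n-2\}$: the color $1$ occurs in the valley column of the bottom level, the color $3m+n-2$ occurs at the peak column of the top interior level, and a short check along one column together with one half-cycle shows that no intermediate color is skipped. Combining this with the interval property established above yields an interval $(3m+n-2)$-coloring of $C(m,2n)$, and hence $W(C(m,2n))\ge 3m+n-2$, as required.
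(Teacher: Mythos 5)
The paper itself offers no proof of this theorem: it is quoted from the conference paper of Petrosyan and Karapetyan \cite{b29}, so there is no internal argument to compare yours against, and your proposal must stand on its own. Its general shape is right --- an explicit coloring of $P_{m}\square C_{2n}$ in which the vertical edges climb by $3$ per level and the horizontal edges follow a height-$n$ ``mountain'' profile around the even cycle is exactly the kind of construction that yields a span of $3(m-1)+n+1=3m+n-2$ --- but as written it is a plan rather than a proof. You never write down the coloring, and you explicitly defer verification of precisely the places you yourself identify as the main obstacle: the peak column $j=n$, the wrap-around column $j=0$, the boundary levels $i=1$ and $i=m$, and the four corner regions where these interact. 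The assertion that resolving the peak costs exactly two colors, so that the optimistic $3m+n$ degrades to $3m+n-2$ and no further, is the entire content of the theorem, and it is asserted rather than demonstrated.

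There is also a concrete consistency issue that the phrase ``a direct substitution shows'' glosses over. At an interior vertex whose two vertical edges carry the extreme colors $c$ and $c+3$, the two horizontal edges must carry $c+1$ and $c+2$; if the profile shifts by $1$ between columns $j$ and $j+1$, the middle pairs at the two endpoints of the edge $e^{h}_{i,j}$ overlap in a single color, so the color of every horizontal edge is forced, and one must check that the two forced horizontal colors at each vertex are distinct and that the forcing does not propagate to a contradiction around the cycle --- this is exactly where the evenness of $2n$ and the loss at the peak enter, and it is the step most likely to fail if the local modifications at $j=0$ and $j=n$ are chosen carelessly. Finally, the degenerate cases $m=1$ (where $C(1,2n)=C_{2n}$ has no vertical edges at all and the claim reduces to the separate fact $W(C_{2n})\geq n+1$) and $m=2$ (where every vertex has degree $3$) fall outside your ``interior vertex of degree $4$'' framework and need their own treatment. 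Until the coloring is written out explicitly and these checks are carried out, the bound $W(C(m,2n))\geq 3m+n-2$ is not established.
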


\begin{theorem}
\label{mytheorem5}\cite{b29}. If $G=T(2m,2n)$, $m,n \geq 2$, then
$W(G)\geq \max\{3m+n,3n+m\}$.
\end{theorem}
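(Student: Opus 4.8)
The plan is to exhibit an explicit interval coloring. First, since the Cartesian product is commutative, $T(2m,2n)=C_{2m}\square C_{2n}\cong C_{2n}\square C_{2m}=T(2n,2m)$, and $W$ is invariant under isomorphism; hence it suffices to construct an interval coloring of $T(2m,2n)$ that uses at least $3m+n$ colors, the bound $\geq 3n+m$ then following by interchanging the roles of $m$ and $n$. Note that $T(2m,2n)$ is $4$-regular, so in any such coloring the four edges incident to every vertex must receive four consecutive integers. I would label the vertices $(i,j)$ with $i\in\{0,\ldots,2m-1\}$ and $j\in\{0,\ldots,2n-1\}$, write $h_{i,j}$ for the horizontal edge $\{(i,j),(i+1,j)\}$ of the $C_{2m}$-factor and $v_{i,j}$ for the vertical edge $\{(i,j),(i,j+1)\}$ of the $C_{2n}$-factor (indices taken mod $2m$ and mod $2n$), and introduce the tent functions $T_m(i)=\min\{i,2m-i\}$ and $T_n(j)=\min\{j,2n-j\}$, which change by exactly $1$ at each step around their cycle.

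The guiding idea is a \emph{double tent} vertex potential $\phi(i,j)=3\,T_m(i)+T_n(j)$, whose range is exactly $[0,3m+n]$, together with the rule that, in the increasing phases, assigns the value $\phi(i,j)+1$ to the forward horizontal edge and $\phi(i,j)$ to the forward vertical edge (the backward edges then inheriting $\phi(i-1,j)+1$ and $\phi(i,j-1)$ from their other endpoints). For a generic vertex strictly inside an increasing quadrant this works immediately: the four incident colors come out as $\{\phi-2,\phi-1,\phi,\phi+1\}$, with the horizontal pair $\phi-2,\phi+1$ (differing by the slope $3$) and the vertical pair $\phi-1,\phi$ (differing by the slope $1$) interleaving into four consecutive integers, and the two values computed for each shared edge from its endpoints agree, so the coloring is well defined. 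Carrying this out with the mirrored formulas on the four quadrants cut out by $i\in\{0,m\}$ and $j\in\{0,n\}$, the vertical edges of each column form an ordinary slope-$1$ tent around $C_{2n}$, which closes up exactly as the optimal coloring of a single even cycle does; hence the turning points $j=0$ and $j=n$ cause no trouble.

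The hard part is the horizontal slope-$3$ turnarounds at $i=0$ and $i=m$. There the two horizontal edges at a single vertex are the two extreme colors of its window of four (they differ by $3$), so when the tent reverses direction the windows of the two neighbours on either side cannot both overlap the central window in a single common color without forcing two edges at the turning vertex to receive the same color — a genuine \emph{off-by-one} obstruction that the naive potential does not resolve. I expect this to be the crux of the argument. It should be repaired by a local modification along the two turning cycles $i=0$ and $i=m$ (for instance inserting a one-step plateau into the horizontal tent, or shifting one arm of the tent by a constant and re-absorbing the discrepancy into the incident vertical edges), followed by a separate check of the interval condition at the four corner vertices where a horizontal turnaround meets a vertical one.

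Finally I would verify that, after these corrections, the colors actually used form a contiguous block of at least $3m+n$ integers with every value attained (the ramp/tent structure guarantees no interior color is skipped), so that the coloring is a bona fide interval $t$-coloring with $t\geq 3m+n$. This gives $W\bigl(T(2m,2n)\bigr)\geq 3m+n$, and, by the symmetry noted at the outset, $W\bigl(T(2m,2n)\bigr)\geq\max\{3m+n,\,3n+m\}$, as required. An alternative route, likely reducing to the same obstruction, is to build the coloring by folding the cylinder coloring underlying Theorem~\ref{mytheorem4}: realize $C_{2m}$ as two copies of $P_{m+1}$ glued at $i=0$ and $i=m$, color each resulting cylinder $P_{m+1}\square C_{2n}$ by the ramp-in-$i$, tent-in-$j$ scheme, and reconcile the two colorings along the shared boundary cycles — where precisely the slope-$3$ matching at the peak cycle $i=m$ reappears as the main difficulty.
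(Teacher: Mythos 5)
The paper does not actually prove Theorem~\ref{mytheorem5}; it is quoted from the conference paper \cite{b29}, so there is no internal proof to measure your attempt against. Judged on its own terms, your proposal has a genuine gap, and it is exactly the one you flag yourself. The opening reduction is fine: $T(2m,2n)\cong T(2n,2m)$, so it suffices to produce an interval coloring with at least $3m+n$ colors, and the ``slope $3$ in the $C_{2m}$-direction, slope $1$ in the $C_{2n}$-direction'' ansatz is the right way to reach that count in a $4$-regular graph, since the two horizontal edges at a vertex must then be the extremes of its window of four consecutive colors. But the entire difficulty of the theorem lives at the two turning circles $i=0$ and $i=m$ of the symmetric tent $3\min\{i,2m-i\}$: there the two arms of the tent assign the \emph{same} value to the two horizontal edges $h_{0,j}$ and $h_{2m-1,j}$ at the vertex $(0,j)$, whereas the interval condition forces them to differ by $3$. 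You name this obstruction, call it ``the crux,'' and then dispose of it with ``it should be repaired by a local modification.'' That is precisely the step that constitutes the proof, and it is not supplied.

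Neither of the two repairs you gesture at is obviously adequate. A one-step plateau in the horizontal tent flattens the potential across the fold, which by itself still leaves the two horizontal edges at the turning vertex in conflict (or collapses the window to fewer than four distinct colors) and, if it does go through, shortens the spread so that the final count must be re-established rather than read off as $3m+n$. Shifting one arm of the tent by a constant breaks the symmetry at $i=0$ but must simultaneously close up at the antipodal fold $i=m$, must keep every intermediate color nonempty so that the coloring is an interval $t$-coloring and not merely a proper one, and must be reconciled with the vertical tent at the four circles where the two turnarounds interact --- none of which is checked. Until an explicit edge-color formula is written down and verified at the turning circles (this is where the published construction in \cite{b29} does its real work), the argument establishes only that a coloring of the desired shape is plausible, not that it exists.
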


In \cite{b30} Petrosyan investigated interval edge colorings of
complete graphs and $n$-dimensional cubes $Q_{n}$. In particular, he
proved the following theorems:

\begin{theorem}
\label{mytheorem6} $~W\left(Q_{n}\right)\geq
\frac{n\left(n+1\right)}{2}$ for any $n\in \mathbf{N}$.
\end{theorem}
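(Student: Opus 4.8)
The plan is to prove the stronger recursive estimate $W(Q_{n})\geq W(Q_{n-1})+n$ and then iterate it. Note first that applying Theorem~\ref{mytheorem2} to the decomposition $Q_n=Q_{n-1}\square K_2$ yields only $W(Q_n)\geq W(Q_{n-1})+W(K_2)=W(Q_{n-1})+1$, a linear bound; the real point is that the single extra dimension can in fact contribute $n$ new colors rather than one. I would set up an induction on $n$ with base cases $W(Q_1)=W(K_2)=1=\frac{1\cdot 2}{2}$ and $W(Q_2)=W(C_4)\geq 3=\frac{2\cdot 3}{2}$, the latter by exhibiting an explicit interval $3$-coloring of the $4$-cycle.

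For the inductive step I would exploit that $Q_n$ consists of two copies $Q'$ and $Q''$ of $Q_{n-1}$ together with the perfect matching joining corresponding vertices. Starting from an interval $t$-coloring $\alpha$ of $Q_{n-1}$ with $t=W(Q_{n-1})$, I color $Q'$ by $\alpha$, using colors $1,\dots,t$, and color $Q''$ by the shifted coloring $\alpha+n$, using colors $n+1,\dots,t+n$. Because $Q_{n-1}$ is $(n-1)$-regular, at every vertex $v$ the set $S(v,\alpha)$ is an interval of exactly $n-1$ consecutive integers, say $[a_v,a_v+n-2]$. The crucial observation is that a single matching edge $v'v''$ can be colored $a_v+n-1$ so as to extend the interval at \emph{both} of its endpoints: at $v'$ this turns $[a_v,a_v+n-2]$ into $[a_v,a_v+n-1]$, while at $v''$ the shifted interval $[a_v+n,a_v+2n-2]$ is extended downward to $[a_v+n-1,a_v+2n-2]$. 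Thus every vertex of $Q_n$ receives an interval of $n$ colors, and properness is immediate since the matching color $a_v+n-1$ lies outside both incident palettes and the matching edges are pairwise nonadjacent.

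It then remains to check that all colors $1,\dots,t+n$ actually occur, i.e.\ that there is no gap between the palette $\{1,\dots,t\}$ of $Q'$ and the palette $\{n+1,\dots,t+n\}$ of $Q''$. These two ranges overlap (or abut) precisely when $t\geq n$, which is exactly where the inductive hypothesis $W(Q_{n-1})\geq\frac{(n-1)n}{2}\geq n$ (valid for $n\geq 3$) is used. Granting this, the construction gives an interval $(t+n)$-coloring, whence $W(Q_n)\geq W(Q_{n-1})+n\geq \frac{(n-1)n}{2}+n=\frac{n(n+1)}{2}$.

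I expect the main obstacle to be the simultaneous-extension step: one must verify that assigning a single color to each matching edge widens the palette at both endpoints to a length-$n$ interval, and this succeeds only because regularity forces every vertex-interval of $\alpha$ to have the common length $n-1$, so that the uniform shift $s=n$ is exactly the amount which makes the two shifted copies meet without overlap or gap. The secondary technical care is the no-gap (surjectivity) condition $t\geq n$, which fails for $n=2$ and is precisely the reason $Q_2$ must be handled as a separate base case.
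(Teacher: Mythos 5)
Your argument is correct. Note that this paper does not actually contain a proof of Theorem~\ref{mytheorem6}: the result is quoted from \cite{b30}, and the proof there proceeds by essentially the recursion you establish, namely $W(Q_{n})\geq W(Q_{n-1})+n$ via the decomposition $Q_{n}=Q_{n-1}\square K_{2}$, with the second copy shifted and each matching edge at $v$ colored $\max S(v,\alpha)+1$ so as to extend the palette at both endpoints. Your identification of the two genuine technical points --- that regularity forces every vertex palette of $\alpha$ to have length exactly $n-1$ (so one color per matching edge suffices to extend both ends), and that surjectivity of the color set requires $W(Q_{n-1})\geq n$, which fails for $n=2$ and forces $Q_{2}=C_{4}$ to be a separate base case --- is exactly right, and the telescoping $3+\sum_{k=3}^{n}k=\frac{n(n+1)}{2}$ closes the induction.
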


\begin{theorem}
\label{mytheorem7} Let $n=p2^{q}$, where $p$ is odd and $q$ is
nonnegative. Then
\begin{center}
$W\left(K_{2n}\right)\geq 4n-2-p-q$.
\end{center}
\end{theorem}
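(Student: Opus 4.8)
The plan is to exhibit, for every $n$, an explicit interval coloring of $K_{2n}$ using at least $4n-2-p-q$ colors. Since $K_{2n}$ is regular and of class one (i.e.\ $\chi'(K_{2n})=2n-1=\Delta(K_{2n})$), Theorem~\ref{mytheorem1}(1) already guarantees $K_{2n}\in\mathfrak{N}$, so the only real issue is the number of colors. Writing $n=p2^{q}$, I would induct on $q$. The arithmetic of the bound points to a doubling recursion: setting $m=p2^{q-1}$ (so $2n=4m$), the target value $4n-2-p-q=8m-2-p-q$ exceeds the value $4m-1-p-q$ for $K_{2m}$ by exactly $4m-1=\Delta(K_{4m})$. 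Hence it suffices to prove $W(K_{4m})\ge W(K_{2m})+(4m-1)$ together with the base case $W(K_{2p})\ge 3p-2$ for odd $p$.

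For the doubling step I would partition $V(K_{4m})$ into two sets $A,B$ with $|A|=|B|=2m$. Then $A$ and $B$ each induce a copy of $K_{2m}$, and the edges between them form the $2m$-regular bipartite graph $K_{2m,2m}$, which is interval colorable with intervals of length $2m$ at every vertex (Corollary~\ref{mycorollary1}). Starting from a maximum interval coloring of $A\cong K_{2m}$ with colors $1,\dots,t$, $t=W(K_{2m})$, I would place a shifted copy of it on $B$ and then color the bipartite join so that each vertex acquires exactly $4m-1$ colors that are new relative to its current spectrum, while its whole spectrum stays an interval. If this succeeds, every vertex of $K_{4m}$ ends with an interval of length $(2m-1)+2m=4m-1$ and the palette grows by $4m-1$, which is the recursion.

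The hard part is precisely this meshing across the bipartite join. For a fixed $a_i\in A$ the $2m$ colors on its edges to $B$ must extend the interval $I_i=[\ell_i,r_i]$ of $a_i$ (possibly partly below $\ell_i$ and partly above $r_i$), and simultaneously, for each $b_j\in B$, the colors on the edges into $b_j$ must extend $b_j$'s interval. Since the positions $I_i$ vary from vertex to vertex, a naive common shift, or a Latin-square (all-windows-equal) coloring of $K_{2m,2m}$, forces all the $I_i$ to coincide and collapses the bound to the trivial one; in particular the diagonal edges $a_ib_i$ then cannot be placed. To get around this I would strengthen the induction hypothesis: rather than merely asserting the numerical bound on $W(K_{2m})$, I would carry along structural control of the optimal coloring (the exact positions of the vertex intervals, e.g.\ that certain vertices have intervals sitting at the two extreme ends of the palette), chosen so that a bipartite coloring of the additive form $\gamma(a_ib_j)=f(i)+g(j)$ meshes consistently with both sides and introduces the required $4m-1$ new colors split between the low and high ends.

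Finally, the base case $q=0$ requires a direct explicit interval coloring of $K_{2p}$ with $3p-2$ colors for odd $p$, and this is genuinely the delicate point of that case: the two induced copies of $K_{p}$ are \emph{not} individually interval colorable, since for odd $p\ge 3$ one has $\chi'(K_p)=p>p-1=\Delta(K_p)$ and Theorem~\ref{mytheorem1}(1) fails for them, so only the combination with the bipartite join can be made interval. I would therefore give a cyclic/algebraic coloring on the vertex set $\mathbb{Z}_{2p-1}\cup\{\infty\}$, verify directly that every vertex spectrum is an interval and that exactly $3p-2$ colors occur, and then unwind the doubling recursion to obtain $W(K_{2n})\ge 4n-2-p-q$ in general.
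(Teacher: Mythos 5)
First, a point of reference: the paper does not prove Theorem~\ref{mytheorem7} at all --- it is quoted from \cite{b30} --- so there is no in-paper argument to compare against; your outline does, however, reproduce the architecture of the proof in that source: induction on $q$ via the decomposition of $K_{4m}$ into two copies of $K_{2m}$ plus the join $K_{2m,2m}$, the recursion $W(K_{4m})\ge W(K_{2m})+(4m-1)$, and a separate base case $W(K_{2p})\ge 3p-2$ for odd $p$. Your arithmetic is correct (the target for $n=2m$ exceeds that for $m$ by exactly $4m-1=\Delta(K_{4m})$), and your observation that a Latin-square coloring of the join forces all vertex intervals of the two cliques to align and therefore only recovers the trivial bound $\Delta(K_{4m})$ is also sound.

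The difficulty is that, as written, this is a plan rather than a proof, and the two steps you defer are precisely where all the content of the theorem lives. (i) For the doubling step you propose to ``strengthen the induction hypothesis'' with unspecified ``structural control'' of the vertex-interval positions so that an additive coloring $f(i)+g(j)$ of the join meshes with both cliques; but you never state the invariant, show the base case satisfies it, or show it propagates. This is not a formality: if one tries the natural symmetric scheme in which every $a_i$ is extended upward by $2m$ colors and every $b_j$ downward by $2m$ colors, a count of the total color mass on the join forces a non-integer shift between the two cliques, so the correct construction must be genuinely asymmetric, and exhibiting it is the heart of the inductive step. (ii) The base case $W(K_{2p})\ge 3p-2$ for odd $p$ is reduced to ``a cyclic/algebraic coloring on $\mathbb{Z}_{2p-1}\cup\{\infty\}$'' that is to be ``verified directly,'' but no coloring is written down and nothing is verified; this construction is the source of the $-p$ term in the bound and cannot be waved through. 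Until both constructions are made explicit and checked, the inequality $W(K_{2n})\ge 4n-2-p-q$ has not been established.
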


The Hamming graph $H(n_{1},n_{2},\ldots,n_{k})$, $n_{i}\in
\mathbf{N}$ is the Cartesian product of complete graphs
$K_{n_{1}}\square K_{n_{2}}\square \cdots \square K_{n_{k}}$. The
graph $H_{n}^{k}$ is the Cartesian product of the complete graph
$K_{n}$ by itself $k$ times. It is easy to see that from Theorems
\ref{mytheorem1}, \ref{mytheorem2} and \ref{mytheorem7}, we have the
following result:

\begin{theorem}
\label{mytheorem8} Let $n=p2^{q}$, where $p$ is odd and $q$ is
nonnegative. Then
\begin{description}
\item[(1)] $H_{2n}^{k}\in \mathfrak{N}$,

\item[(2)] $w(H_{2n}^{k})=(2n-1)k$,

\item[(3)] $W(H_{2n}^{k})\geq (4n-2-p-q)k$.
\end{description}
\end{theorem}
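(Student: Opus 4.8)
The plan is to deduce all three parts from the cited results by induction on $k$, after settling the base case $k=1$ for $K_{2n}$ itself. First I would observe that $K_{2n}$ is a $(2n-1)$-regular graph whose chromatic index equals its maximum degree: since $2n$ is even, $\chi'(K_{2n}) = 2n-1 = \Delta(K_{2n})$. Theorem \ref{mytheorem1}(1) then gives $K_{2n} \in \mathfrak{N}$. Moreover, any proper $(2n-1)$-edge-coloring of the $(2n-1)$-regular graph $K_{2n}$ forces every vertex to meet all $2n-1$ colors, so the colors at each vertex form the interval $[1,2n-1]$; hence such a coloring is already an interval coloring, and $w(K_{2n}) = \Delta(K_{2n}) = 2n-1$.

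For part (1), since $H_{2n}^{k} = K_{2n} \square \cdots \square K_{2n}$ with $k$ factors and $K_{2n} \in \mathfrak{N}$, repeated application of Theorem \ref{mytheorem2} yields $H_{2n}^{k} \in \mathfrak{N}$ by a straightforward induction on $k$.

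For part (2), I would use that the Cartesian product adds degrees, so $\Delta(H_{2n}^{k}) = (2n-1)k$; since $w(G) \geq \Delta(G)$ holds for every $G \in \mathfrak{N}$, this gives the lower bound $w(H_{2n}^{k}) \geq (2n-1)k$. For the matching upper bound, the inequality $w(G \square H) \leq w(G)+w(H)$ of Theorem \ref{mytheorem2}, applied inductively together with $w(K_{2n}) = 2n-1$, gives $w(H_{2n}^{k}) \leq k(2n-1)$. Combining the two bounds yields the claimed equality.

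For part (3), I would apply the other half of Theorem \ref{mytheorem2}, namely $W(G \square H) \geq W(G)+W(H)$, inductively to obtain $W(H_{2n}^{k}) \geq k\,W(K_{2n})$, and then invoke Theorem \ref{mytheorem7}, which gives $W(K_{2n}) \geq 4n-2-p-q$. This produces $W(H_{2n}^{k}) \geq (4n-2-p-q)k$, as required. The whole argument is essentially a bookkeeping induction; the only point demanding care is the base case, in particular verifying $\chi'(K_{2n}) = 2n-1$ and the consequent value $w(K_{2n}) = 2n-1$, since the tightness of part (2) rests entirely on it.
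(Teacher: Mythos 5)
Your proposal is correct and follows exactly the route the paper intends: the paper states this theorem as an immediate consequence of Theorems \ref{mytheorem1}, \ref{mytheorem2} and \ref{mytheorem7}, and your argument simply fills in the bookkeeping (the base case $w(K_{2n})=2n-1$ via $\chi'(K_{2n})=\Delta(K_{2n})$, then induction on $k$ using the two inequalities of Theorem \ref{mytheorem2}). No discrepancy with the paper's approach.
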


It is known that there are graphs $G$ and $H$ for which $G\square
H\in \mathfrak{N}$ ($G[H]\in \mathfrak{N}$), but $G\in
\mathfrak{N}$, $H\notin\mathfrak{N}$ or $G,H\notin\mathfrak{N}$. For
example, $K_{2}\square C_{3}\in \mathfrak{N}$ and $K_{1,1,3}\square
C_{3}\in \mathfrak{N}$ ($K_{2}[C_{5}]\in \mathfrak{N}$ and
$C_{5}[P]\in \mathfrak{N}$), but $K_{1,1,3},C_{3}\notin
\mathfrak{N}$ ($P,C_{5}\notin \mathfrak{N}$, where $P$ is the
Petersen graph). Moreover, general results can be obtained from the
following theorems:

\begin{theorem}
\label{mytheorem9} (Kotzig \cite{b24}, Pisanski, Shawe-Taylor, Mohar
\cite{b31}) If $G$ and $H$ are two regular graphs for which at least
one of the following conditions holds:

\begin{description}
\item[(1)] $G$ and $H$ contain a perfect matching,

\item[(2)] $\chi^{\prime}(G)=\Delta(G)$,

\item[(3)] $\chi^{\prime}(H)=\Delta(H)$,
\end{description}
then $\chi^{\prime}(G\square H)=\Delta(G\square H)$ and
$\chi^{\prime}(G[H])=\Delta(G[H])$.
\end{theorem}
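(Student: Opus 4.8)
Since $G$ is $r$-regular and $H$ is $s$-regular, both products are regular: a short degree count gives that $G\square H$ is $(r+s)$-regular and $G[H]$ is $(rn+s)$-regular, where $n=|V(H)|$, so $\Delta(G\square H)=r+s$ and $\Delta(G[H])=rn+s$. For a regular graph the condition $\chi'=\Delta$ is equivalent to the existence of a $1$-factorization (a partition of the edge set into perfect matchings, one per colour). Thus the plan is to exhibit a $1$-factorization of $G\square H$ into $r+s$ perfect matchings and of $G[H]$ into $rn+s$ perfect matchings.

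I first treat the Cartesian product under condition (2); condition (3) then follows from $G\square H\cong H\square G$. Let $F_1,\dots,F_r$ be a $1$-factorization of $G$ (we may assume $r\geq 1$, the edgeless case being trivial). For each $i$ the edge set $\widehat F_i$ of all $((u_1,v),(u_2,v))$ with $\{u_1,u_2\}\in F_i$ and $v\in V(H)$ is a perfect matching of $G\square H$. If $H$ is class $1$ we are dealing with two $1$-factorable factors and the products of their factors already give a $1$-factorization; so assume $H$ is class $2$, properly coloured with exactly $s+1$ colours, so that each vertex $v$ of $H$ misses a unique colour $m(v)$. Put $U_j=\{v:m(v)=j\}$, a partition of $V(H)$. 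For $j=1,\dots,s+1$ let $P_j$ consist of all $H$-edges $((u,v_1),(u,v_2))$ of colour $j$ together with the edges of $F_1$ taken inside each copy $G_v$ with $v\in U_j$. Each $P_j$ is a perfect matching of $G\square H$, the $P_j$ are pairwise edge-disjoint, and together they exhaust all $H$-edges and all of $\widehat F_1$; adjoining $\widehat F_2,\dots,\widehat F_r$ yields $(s+1)+(r-1)=r+s$ perfect matchings covering every edge.

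Condition (1) is the hard case: now neither factor need be class $1$, so the $G$-edges alone (copies of $G$) and the $H$-edges alone (copies of $H$) genuinely require $r+1$ and $s+1$ colours, and colouring them separately wastes two colours. Recovering these forces one to place $G$-edges and $H$-edges into a common colour class, the natural site being the product $4$-cycles that make up $\widehat M_G\cup\widehat M_H$, where $M_G,M_H$ are the given perfect matchings. The obstacle is that a single such $4$-cycle splits only into its two $G$-edges and its two $H$-edges, so no mixing happens locally, while peeling off $\widehat M_G$ and $\widehat M_H$ leaves $(G-M_G)\square(H-M_H)$, whose factors may have lost their perfect matchings, collapsing any naive induction. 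This is the main difficulty, and it is precisely what Kotzig's global amalgamation of the two $1$-factors along longer alternating structures resolves.

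For the composition $G[H]$ the edges split into the $G$-type edges, which form the blow-up $G[nK_1]$ (each edge of $G$ becomes a copy of $K_{n,n}$, so $G[nK_1]$ is $rn$-regular), and the $H$-type edges, which form $|V(G)|$ disjoint copies of $H$. The first step is that $G[nK_1]$ is class $1$ under every hypothesis: if $G$ is class $1$ (condition (2)) this is immediate by blowing up a $1$-factorization of $G$ and $1$-factorizing each $K_{n,n}$; in the two remaining cases $H$ has a perfect matching—by assumption in (1), and because a regular class-$1$ graph has one in (3)—whence $n=|V(H)|$ is even, and for even $n$ the graph $G[nK_1]$ is class $1$ owing to the abundance of $1$-factorizations of $K_{n,n}$. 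If $H$ is class $1$ (in particular under (3)) its $s$ factors combine directly with the $rn$ factors of $G[nK_1]$ and no patching is needed. Otherwise $H$ is class $2$, each vertex misses a unique colour, and the $H$-type edges are incorporated as in the Cartesian argument: using the classes indexed by $U_1,\dots,U_{s+1}$ one patches the vertices left uncovered in the slice $V(G)\times U_j$ by a perfect matching taken inside $G[nK_1]$. The technical heart here is to choose these patching matchings to be slice-respecting and simultaneously to constitute one factor of a $1$-factorization of $G[nK_1]$; the flexibility of the $K_{n,n}$ blocks, together with the perfect matching of $G$ available in cases (1) and (2), is exactly what permits this, after which the remaining $rn-1$ factors of $G[nK_1]$ and the $s+1$ patched matchings supply $rn+s$ colours.
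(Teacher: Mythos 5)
This theorem is stated in the paper purely as a cited result of Kotzig \cite{b24} and of Pisanski, Shawe-Taylor and Mohar \cite{b31}; the paper contains no proof of it, so your attempt has to stand on its own. On its own it is incomplete: the only case you actually prove is the Cartesian product under conditions (2)/(3), and that part is correct (the missing-colour classes $U_j$ and the matchings $P_j$ do give $r+s$ perfect matchings). Everything else is an outline with the hard steps named but not carried out.

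The first concrete gap is condition (1) for $G\square H$: you correctly observe that the $4$-cycles of $\widehat{M_G}\cup\widehat{M_H}$ do not mix the two edge types and that peeling them off breaks induction, but you then stop and appeal to ``Kotzig's global amalgamation''. In fact this case has a short proof using exactly the tool you already deployed. For any graph $A$ the prism $A\square K_2$ is Class~1: colour $A$ properly with $\Delta(A)+1$ colours, use the same colouring in both copies, and give each rung the colour missing at its endpoint (the same colour is missing in both copies). Now partition $E(G\square H)$ into $\widehat{E(G)\setminus M_G}\cup\widehat{M_H}$, which is a vertex-disjoint union of spanning prisms $(G-M_G)\square K_2$ (one per edge of $M_H$) and hence splits into $r$ perfect matchings of $G\square H$, and $\widehat{E(H)\setminus M_H}\cup\widehat{M_G}$, a disjoint union of prisms $(H-M_H)\square K_2$ giving $s$ more; together these are $r+s=\Delta(G\square H)$ colours.

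The second gap is the lexicographic product, where two essential claims are asserted rather than proved: (i) that $G[nK_1]$ is Class~1 for every regular $G$ when $n$ is even --- ``the abundance of $1$-factorizations of $K_{n,n}$'' is not an argument, since the blocks can always be factorized separately and the whole difficulty is gluing their factors into global perfect matchings when $G$ is Class~2 (this is a standalone lemma in \cite{b31} with its own nontrivial proof); and (ii) the existence of ``slice-respecting'' patching matchings inside $V(G)\times U_j$ that simultaneously assemble into one factor of a $1$-factorization of $G[nK_1]$, which you yourself label ``the technical heart'' and do not supply (note, for instance, that this already requires each $|U_j|\cdot|V(G)|$ to be even, which needs an argument). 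As written, the proposal establishes one of the six case/product combinations and sketches the rest.
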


\begin{theorem}
\label{mytheorem10} (Kotzig \cite{b24}, Pisanski, Shawe-Taylor,
Mohar \cite{b31}) Let $G$ be a cubic graph. Then
$\chi^{\prime}(G\square C_{n})=\Delta(G\square C_{n})=5$ and
$\chi^{\prime}(C_{n}[G])=\Delta(C_{n}[G])$ for any $n\geq 4$.
\end{theorem}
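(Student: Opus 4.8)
The plan is to treat both products in parallel and to reduce each to the construction of an optimal (class-one) edge coloring. First I would record the relevant degrees: in $G\square C_n$ every vertex has degree $3+2=5$, so $\Delta(G\square C_n)=5$, while in $C_n[G]$ a vertex $(i,v)$ is joined to its three $G$-neighbours inside its own block and to all $m:=|V(G)|$ vertices of each of the two neighbouring blocks, so $\Delta(C_n[G])=2m+3$. Both graphs are regular, so by Vizing's theorem it suffices to produce proper edge colorings with $5$ and $2m+3$ colors; for a regular graph such an optimal coloring is exactly a $1$-factorization, and its mere existence already yields membership in $\mathfrak{N}$ via Theorem~\ref{mytheorem1}(1). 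Note also that $m$ is even.

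Second, I would dispose of the easy regime using Theorem~\ref{mytheorem9}. Each of $G\square C_n$ and $C_n[G]$ is a product of the regular graphs $C_n$ (degree $2$) and the cubic graph $G$. If $n$ is even then $\chi'(C_n)=\Delta(C_n)=2$, and if $G$ is class one then $\chi'(G)=\Delta(G)=3$; in either situation one of conditions (2),(3) of Theorem~\ref{mytheorem9} holds, giving at once $\chi'(G\square C_n)=5$ and $\chi'(C_n[G])=2m+3$. Hence the only case requiring genuine work is $n$ odd together with $G$ a class-two cubic graph (a snark). This is why the content lives at odd $n$, and why the hypothesis $n\ge 4$ (so $n\ge 5$ here) is natural: it provides the room needed for the repairs below.

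For $G\square C_n$ with $n$ odd I would, assuming $G$ bridgeless, invoke Petersen's theorem to split $G$ into a $2$-factor $F$ and a complementary perfect matching $P=E(G)\setminus F$ (bridged cubic graphs need a separate, routine reduction). Copying $P$ across all layers gives a perfect matching $\mathcal{P}$ of $G\square C_n$, while $F\square C_n$ is the disjoint union of tori $C_{\ell}\,\square\,C_n$, one per cycle $C_{\ell}$ of $F$. Each torus with $\ell$ even is class one by Theorem~\ref{mytheorem9} (its factor $C_{\ell}$ is class one), so it takes four colors, and coloring $\mathcal{P}$ with the fifth color finishes those components. The obstruction sits in the odd cycles of $F$, since $C_{\ell}\,\square\,C_n$ with $\ell,n$ both odd has an odd number of vertices, hence no perfect matching, hence is class two. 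The saving observation is that the number of odd cycles of $F$ is even (their lengths sum to $|V(G)|$, which is even), so they pair up, and each pair is joined in $G$ by an edge of $P$; I would use the corresponding $\mathcal{P}$-edges together with alternating-path recolorings along the $C_n$-direction to push the unavoidable fifth color of each odd torus onto the paired fibre, where the extra internal matching edges absorb it. Making all of these local repairs simultaneously consistent is the main obstacle of this half, and it is exactly here that $n\ge 5$ is used.

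For $C_n[G]$ with $n$ odd the same dichotomy again reduces everything to the snark case, but now the interconnections are more generous: consecutive blocks are joined by a complete bipartite graph $K_{m,m}$, which is $m$-regular bipartite and hence class one with many $1$-factorizations. I would start from a proper $4$-edge-coloring of each block-copy of $G$ (Vizing), fixing a palette $Y$ of four within-block colors; at each vertex $v$ exactly one color $\mu(v)\in Y$ is absent among its three block edges, and the remaining $2m-1$ colors $X$ are reserved as pure between-block colors. The task then reduces to properly coloring the $2m$-regular between-block graph so that the palette seen at each vertex $v$ is precisely $X\cup\{\mu(v)\}$; equivalently, at each vertex all but one between-edge use colors of $X$, and the exceptional between-edge $(i,v)(i{+}1,w)$ may carry the within-color $\mu(v)$ only when $\mu(v)=\mu(w)$. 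The flexibility of the $K_{m,m}$ connections makes the existence of such a constrained coloring plausible, but the main obstacle is once more the cyclic parity constraint: the exceptional ($Y$-colored) between-edges must form matchings between equal-$\mu$ vertices that close up consistently around the odd cycle $C_n$, and it is the verification of this closure, using $n\ge 4$, that constitutes the crux of the lexicographic case.
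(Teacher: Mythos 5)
The paper does not actually prove this statement: Theorem~\ref{mytheorem10} is quoted from Kotzig \cite{b24} and Pisanski--Shawe-Taylor--Mohar \cite{b31}, so your attempt can only be judged on its own. As it stands it has genuine gaps, and they sit exactly where you yourself flag ``the main obstacle'': in both halves the step you defer is not a technical verification but the entire content of the theorem. The easy reductions (degree computations, the even-$n$ and class-one cases via Theorem~\ref{mytheorem9}) are fine; everything after that is unproved.

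Concretely, for $G\square C_n$: (i) a cubic graph with bridges need not have a perfect matching or a $2$-factor at all, so the Petersen decomposition $E(G)=F\cup P$ on which your whole construction rests is simply unavailable there; calling that case a ``routine reduction'' hides a case that historically required its own argument. (ii) Even for bridgeless $G$, your claim that the odd cycles of $F$ ``pair up, and each pair is joined in $G$ by an edge of $P$'' is not justified and is false in general: the $P$-edges leaving an odd cycle of $F$ may all land on even cycles, so no pairing by $P$-edges need exist. (iii) The ``alternating-path recolorings along the $C_n$-direction'' that are supposed to push the fifth color between paired odd tori are never specified, and making them simultaneously consistent is precisely what must be proved. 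For $C_n[G]$ with $n$ odd, your reduction to a constrained coloring of the $2m$-regular between-block graph (each vertex must see all of $X$ plus exactly one exceptional edge colored $\mu(v)$, with $\mu(v)=\mu(w)$ across that edge) is a reasonable reformulation, but you offer only that its solvability is ``plausible''; the existence of the required $\mu$-respecting perfect matching of exceptional edges closing up around an odd cycle, and the $1$-factorization of what remains, are exactly the crux and are not established. In short, the skeleton is sensible but both load-bearing lemmas are missing, and one auxiliary claim (the $P$-edge pairing of odd cycles) is wrong as stated.
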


\begin{corollary}
\label{mycorollary2} If $G$ and $H$ are two regular graphs for which
at least one of the following conditions holds:

\begin{description}
\item[(1)] $G$ and $H$ contain a perfect matching,

\item[(2)] $G\in \mathfrak{N}$,

\item[(3)] $H\in \mathfrak{N}$,
\end{description}
then $G\square H,G[H]\in \mathfrak{N}$ and $w(G\square
H)=\Delta(G\square H)$, $w(G[H])=\Delta(G[H])$.
\end{corollary}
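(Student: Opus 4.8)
The plan is to deduce this corollary by combining Theorem~\ref{mytheorem9} with the characterization of interval colorability for regular graphs given in Theorem~\ref{mytheorem1}(1). The essential observation is that all four graphs involved, namely $G$, $H$, $G\square H$ and $G[H]$, are regular, so that for each of them membership in $\mathfrak{N}$ and the equality $\chi^{\prime}=\Delta$ are interchangeable.

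First I would record that products of regular graphs are again regular: if $G$ is $r$-regular and $H$ is $s$-regular, then $G\square H$ is $(r+s)$-regular, while $G[H]$ is $(r|V(H)|+s)$-regular, since a vertex $(u,v)$ of $G[H]$ has $r|V(H)|$ neighbours arising from the $G$-edges incident to $u$ and a further $s$ neighbours arising from the $H$-edges incident to $v$. In particular $\Delta(G\square H)=r+s$ and $\Delta(G[H])=r|V(H)|+s$, and both products are regular.

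Next I would translate the hypotheses into the language of Theorem~\ref{mytheorem9}. Since $G$ and $H$ are regular, Theorem~\ref{mytheorem1}(1) gives $G\in\mathfrak{N}\iff\chi^{\prime}(G)=\Delta(G)$ and $H\in\mathfrak{N}\iff\chi^{\prime}(H)=\Delta(H)$. Hence condition~(2) (respectively~(3)) of the corollary is exactly condition~(2) (respectively~(3)) of Theorem~\ref{mytheorem9}, and condition~(1) coincides verbatim in the two statements. Thus, whichever of the three hypotheses is assumed, the corresponding hypothesis of Theorem~\ref{mytheorem9} holds, and that theorem yields $\chi^{\prime}(G\square H)=\Delta(G\square H)$ and $\chi^{\prime}(G[H])=\Delta(G[H])$.

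Finally I would close the argument using regularity once more. Applying Theorem~\ref{mytheorem1}(1) to the regular graphs $G\square H$ and $G[H]$, the two equalities just obtained give $G\square H\in\mathfrak{N}$ and $G[H]\in\mathfrak{N}$. For the values of $w$, I would note that in any interval coloring the colors met at a vertex form an interval whose length equals the degree of that vertex, so $w(F)\geq\Delta(F)$ for every $F\in\mathfrak{N}$; conversely, any proper $\Delta$-edge-coloring of a $\Delta$-regular graph is automatically an interval $\Delta$-coloring, because each vertex meets exactly $\Delta$ distinct colors out of $\{1,\dots,\Delta\}$ and therefore sees the whole interval $\{1,\dots,\Delta\}$. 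Since $G\square H$ and $G[H]$ are regular with chromatic index equal to their maximum degree, this forces $w(G\square H)=\Delta(G\square H)$ and $w(G[H])=\Delta(G[H])$. There is no genuine obstacle in this argument; the only points needing a little care are the degree count for the lexicographic product and the remark that, in the regular case, $\Delta$-edge-colorability already delivers an interval $\Delta$-coloring.
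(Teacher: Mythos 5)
Your argument is correct and is exactly the intended derivation: the paper states this as an immediate corollary of Theorem~\ref{mytheorem9} combined with Theorem~\ref{mytheorem1}(1) applied to the regular products, which is precisely what you do. The added details (the degree counts for $G\square H$ and $G[H]$, and the observation that a proper $\Delta$-edge-coloring of a regular graph is automatically an interval $\Delta$-coloring, giving $w=\Delta$) are accurate and fill in the steps the paper leaves implicit.
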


\begin{corollary}
\label{mycorollary3} Let $G$ be a cubic graph. Then $G\square
C_{n},C_{n}[G]\in \mathfrak{N}$ and $w(G\square
C_{n})=\Delta(G\square C_{n})=5$, $w(C_{n}[G])=\Delta(C_{n}[G])$ for
any $n\geq 4$.
\end{corollary}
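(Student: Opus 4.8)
The plan is to obtain this as an immediate consequence of Theorem~\ref{mytheorem10} together with Theorem~\ref{mytheorem1}, exactly paralleling how Corollary~\ref{mycorollary2} is derived from Theorem~\ref{mytheorem9}. The essential observation is that both $G\square C_{n}$ and $C_{n}[G]$ are \emph{regular} graphs, so Theorem~\ref{mytheorem1}(1) converts the statement $\chi^{\prime}=\Delta$ (supplied by Theorem~\ref{mytheorem10}) directly into membership in $\mathfrak{N}$.

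First I would verify regularity and compute the maximum degrees. Since $G$ is cubic, every vertex of $G$ has degree $3$, and every vertex of $C_{n}$ has degree $2$. In the Cartesian product a vertex $(u,v)$ has degree $\deg_{G}(u)+\deg_{C_{n}}(v)=3+2=5$, so $G\square C_{n}$ is $5$-regular and $\Delta(G\square C_{n})=5$. For the lexicographic product $C_{n}[G]$, a vertex $(u,v)$ is adjacent to every vertex in the $|V(G)|$ copies corresponding to the two neighbors of $u$ in $C_{n}$, together with the $3$ neighbors of $v$ inside its own copy of $G$; hence its degree is $2\,|V(G)|+3$, so $C_{n}[G]$ is $\bigl(2|V(G)|+3\bigr)$-regular and $\Delta(C_{n}[G])=2|V(G)|+3$. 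This degree count for the composition is the one place requiring a small calculation, but it is routine.

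Next, by Theorem~\ref{mytheorem10} we have $\chi^{\prime}(G\square C_{n})=\Delta(G\square C_{n})=5$ and $\chi^{\prime}(C_{n}[G])=\Delta(C_{n}[G])$ for every $n\geq 4$. Applying Theorem~\ref{mytheorem1}(1) to each of these regular graphs then yields $G\square C_{n}\in\mathfrak{N}$ and $C_{n}[G]\in\mathfrak{N}$. Finally, to pin down the least number of colors I would note that for any regular graph $R\in\mathfrak{N}$ one has $w(R)=\Delta(R)$: the inequality $w(R)\geq\Delta(R)$ is forced because every vertex needs $\Delta(R)$ distinct incident colors, while a proper $\Delta(R)$-edge-coloring (which exists precisely because $\chi^{\prime}(R)=\Delta(R)$) automatically makes every vertex see all of $\{1,\dots,\Delta(R)\}$, an interval, since each vertex has degree exactly $\Delta(R)$. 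Taking $R=G\square C_{n}$ gives $w(G\square C_{n})=\Delta(G\square C_{n})=5$, and taking $R=C_{n}[G]$ gives $w(C_{n}[G])=\Delta(C_{n}[G])$, completing the proof.

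There is no substantial obstacle here, as the statement is a packaging of two cited theorems; the only points demanding care are the correct degree formula for the lexicographic product and the short argument that a proper $\Delta$-edge-coloring of a regular graph is automatically an interval coloring, which is the same fact underlying Corollary~\ref{mycorollary1}.
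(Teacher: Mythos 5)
Your proposal is correct and matches the paper's intent exactly: the corollary is stated as an immediate consequence of Theorem~\ref{mytheorem10} combined with Theorem~\ref{mytheorem1}, in direct parallel to how Corollary~\ref{mycorollary2} follows from Theorem~\ref{mytheorem9}, and your degree computations and the observation that $w(R)=\Delta(R)$ for a regular graph $R\in\mathfrak{N}$ are the same routine facts the paper relies on implicitly.
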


\begin{theorem}
\label{mytheorem11} The torus $T(n_{1},n_{2})\in \mathfrak{N}$ if
$n_{1}\cdot n_{2}$ is even, $T(n_{1},n_{2})\notin \mathfrak{N}$ if
$n_{1}\cdot n_{2}$ is odd and the Hamming graph
$H(n_{1},n_{2},\ldots,n_{k})\in \mathfrak{N}$ if $n_{1}\cdot
n_{2}\cdots n_{k}$ is even, $H(n_{1},n_{2},\ldots,n_{k})\notin
\mathfrak{N}$ if $n_{1}\cdot n_{2}\cdots n_{k}$ is odd.
\end{theorem}

\begin{proof} Since $T(n_{1},n_{2})$ and $H(n_{1},n_{2},\ldots,n_{k})$ are regular graphs,
by Theorem \ref{mytheorem1} and Corollary \ref{mycorollary2}, we
have $T(n_{1},n_{2})\in \mathfrak{N}$ when $n_{1}\cdot n_{2}$ is
even and $H(n_{1},n_{2},\ldots,n_{k})\in \mathfrak{N}$ when
$n_{1}\cdot n_{2}\cdots n_{k}$ is even.

Let us show that $T(n_{1},n_{2})\notin \mathfrak{N}$ when
$n_{1}\cdot n_{2}$ is odd and $H(n_{1},n_{2},\ldots,n_{k})\notin
\mathfrak{N}$ when $n_{1}\cdot n_{2}\cdots n_{k}$ is odd.

Since $T(n_{1},n_{2})$ and $H(n_{1},n_{2},\ldots,n_{k})$ are regular
graphs, we have

\begin{center}
$\vert E(T(n_{1},n_{2}))\vert= 2n_{1}\cdot n_{2}$ and $\vert
E(H(n_{1},n_{2},\ldots,n_{k}))\vert = \frac{n_{1}\cdot n_{2}\cdots
n_{k}\cdot \Delta(H(n_{1},n_{2},\ldots,n_{k}))}{2}$.
\end{center}

If $\chi^{\prime}(T(n_{1},n_{2}))=\Delta(T(n_{1},n_{2}))=4$, then
\begin{center}
$\vert E(T(n_{1},n_{2}))\vert \leq 2(n_{1}\cdot n_{2}-1)$, since
$n_{1}\cdot n_{2}$ is odd.
\end{center}

This shows that
$\chi^{\prime}(T(n_{1},n_{2}))=\Delta(T(n_{1},n_{2}))+1=5$ and, by
Theorem \ref{mytheorem1}, $T(n_{1},n_{2})\notin \mathfrak{N}$.

Similarly, if
$\chi^{\prime}(H(n_{1},n_{2},\ldots,n_{k}))=\Delta(H(n_{1},n_{2},\ldots,n_{k}))$,
then
\begin{center}
$\vert E(H(n_{1},n_{2},\ldots,n_{k}))\vert \leq
\frac{\left(n_{1}\cdot n_{2}\cdots n_{k}-1\right)\cdot
\Delta(H(n_{1},n_{2},\ldots,n_{k}))}{2}$, since $n_{1}\cdot
n_{2}\cdots n_{k}$ is odd.
\end{center}

This shows that
$\chi^{\prime}(H(n_{1},n_{2},\ldots,n_{k}))=\Delta(H(n_{1},n_{2},\ldots,n_{k}))+1$
and, by Theorem \ref{mytheorem1}, $H(n_{1},n_{2},\ldots,n_{k})\notin
\mathfrak{N}$. ~$\square$
\end{proof}

\bigskip

\section{Main results}\

First, we consider interval edge colorings of the tensor product of
graphs. In \cite{b25} Kubale and Giaro noted that there are graphs
$G,H\in \mathfrak{N}$, such that $G\times H\notin \mathfrak{N}$.
Here, we prove that if one of the graphs belongs to $\mathfrak{N}$
and the other is regular, then $G\times H\in \mathfrak{N}$.

\begin{theorem}
\label{mytheorem12} If $G\in \mathfrak{N}$ and $H$ is an $r$-regular
graph, then $G\times H\in \mathfrak{N}$. Moreover, $w(G\times H)\leq
w(G)\cdot r$ and $W(G\times H)\geq W(G)\cdot r$.
\end{theorem}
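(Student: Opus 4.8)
The plan is to construct an interval coloring of $G \times H$ directly from an interval coloring of $G$, exploiting the regularity of $H$. The key structural fact is that in the tensor product, an edge $((u_1,v_1),(u_2,v_2))$ exists precisely when $u_1u_2 \in E(G)$ \emph{and} $v_1v_2 \in E(H)$. So the edge set of $G \times H$ naturally decomposes according to which edge of $G$ is used: for each edge $e = u_1u_2 \in E(G)$, the set of product-edges projecting to $e$ is $\{((u_1,v_1),(u_2,v_2)) : v_1v_2 \in E(H)\}$, which is a copy of the bipartite graph between the two fibers over $u_1$ and $u_2$ induced by the adjacency structure of $H$. Since $H$ is $r$-regular, this bipartite "link" is $r$-regular.

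First I would fix an interval coloring $\alpha$ of $G$ using colors $1,\dots,w(G)$ (the minimum case; the general case is analogous with a larger palette). The idea is to \textbf{blow up} each color class of $\alpha$ by a factor of $r$: replace color $c \in \{1,\dots,w(G)\}$ with the block of colors $\{(c-1)r+1, (c-1)r+2, \dots, cr\}$. For a product-edge over $e = u_1u_2$ with $\alpha(e) = c$, I would assign one of the $r$ colors in the $c$-th block. To do this coherently, I would color each $r$-regular bipartite link using $r$ colors so that the colors seen at each vertex of the link form the full set $\{1,\dots,r\}$; by Corollary~\ref{mycorollary1} (or König's theorem), an $r$-regular bipartite graph has a proper $r$-edge-coloring, and being regular, every vertex sees all $r$ colors. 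Then shift these local colors into the block $\{(c-1)r+1,\dots,cr\}$.

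The verification is the heart of the argument. Consider a vertex $(u,v)$ of $G \times H$. Its incident edges split according to the edges of $G$ at $u$: for each neighbor $u'$ of $u$ in $G$, the edges from $(u,v)$ to the fiber over $u'$ form the link at $v$, and by construction these receive exactly the $r$ colors of the block indexed by $\alpha(uu')$. Because $\alpha$ is an interval coloring of $G$, the colors $\{\alpha(uu') : u' \sim u\}$ form an interval $[p,q]$ of integers at $u$; blowing up each by a factor of $r$ turns this into the interval $[(p-1)r+1, qr]$ of consecutive integers at $(u,v)$, since consecutive blocks are themselves consecutive ranges. Thus the colors at $(u,v)$ form an interval, and within each block they are distinct (a proper edge-coloring of the link) while across blocks they are disjoint by construction --- so the coloring is proper and interval. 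The bound $w(G \times H) \le w(G)\cdot r$ follows since we used $w(G)\cdot r$ colors, and the bound $W(G\times H) \ge W(G)\cdot r$ follows by starting instead from an interval $W(G)$-coloring of $G$.

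\textbf{The main obstacle} I anticipate is ensuring the $r$-coloring of each bipartite link is \emph{fully saturated} at every vertex, i.e.\ that each vertex of the link really sees all $r$ colors rather than a proper subset; this is exactly where $r$-regularity of $H$ (hence of the link) is essential, via Corollary~\ref{mycorollary1}. A secondary point requiring care is that the blocks must be chosen as genuinely consecutive integer ranges so that an interval of $G$-colors lifts to an interval of product-colors --- the shift $c \mapsto [(c-1)r+1, cr]$ is what makes "interval in $G$" translate to "interval in $G\times H$," and one must check the boundary of adjacent blocks abuts with no gap, which holds precisely because each block is saturated.
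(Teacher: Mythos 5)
Your proposal is correct and is essentially the paper's own argument: your ``$r$-regular bipartite link'' over an edge of $G$ is exactly the graph $K_2\times H$ used in the paper, and your block shift $c\mapsto\{(c-1)r+1,\dots,cr\}$ is precisely the paper's formula $\gamma=(\alpha-1)\cdot r+\beta$. You simply supply more of the verification that the paper leaves as ``not difficult to see.''
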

\begin{proof} Let $V(G)=\{u_{1},u_{2},\ldots,u_{n}\}$,
$V(H)=\{v_{1},v_{2},\ldots,v_{m}\}$ and

\begin{center}
$V\left(G\times H\right)=\left\{w_{j}^{(i)}|~1\leq i\leq n, 1\leq
j\leq m\right\}$,
\end{center}

\begin{center}
$E\left(G\times
H\right)=\left\{\left(w_{p}^{(i)},w_{q}^{(j)}\right)|\left(u_{i},u_{j}\right)\in
E(G)~and~\left(v_{p},v_{q}\right)\in E(H)\right\}$.
\end{center}

Let us consider the graph $K_{2}\times H$. Clearly, $K_{2}\times H$
is an $r$-regular bipartite graph, thus, by Corollary
\ref{mycorollary1}, $K_{2}\times H\in \mathfrak{N}$ and
$w\left(K_{2}\times H\right)=r$. Let $\alpha$ be an interval
$t$-coloring of the graph $G$, $\beta$ be an interval $r$-coloring
of the graph $K_{2}\times H$ and

\begin{center}
$V\left(K_{2}\times
H\right)=\{x_{1},x_{2},\ldots,x_{m},y_{1},y_{2},\ldots,y_{m}\}$,
\end{center}

\begin{center}
$E\left(K_{2}\times
H\right)=\left\{\left(x_{i},y_{j}\right)|\left(v_{i},v_{j}\right)\in
E(H), 1\leq i\leq m,1\leq j\leq m\right\}$.
\end{center}

Define an edge coloring $\gamma$ of the graph $G\times H$ in the
following way:

for every $\left(w_{p}^{(i)},w_{q}^{(j)}\right)\in E(G\times H)$

\begin{center}
$\gamma
\left(\left(w_{p}^{(i)},w_{q}^{(j)}\right)\right)=\left(\alpha
\left(\left(u_{i},u_{j}\right)\right)-1\right)\cdot r+\beta
\left(\left(x_{p},y_{q}\right)\right)$,
\end{center}
where $1\leq i\leq n, 1\leq j\leq n, 1\leq p\leq m, 1\leq q\leq
m$.\\

It is not difficult to see that $\gamma$ is an interval $t\cdot
r$-coloring of the graph $G\times H$. By the definition of $\gamma$,
we have $w(G\times H)\leq w(G)\cdot r$ and $W(G\times H)\geq
W(G)\cdot r$. ~$\square$
\end{proof}

The Figure \ref{example} shows the interval $6$-coloring $\gamma$ of
the graph $P_{4}\times C_{5}$ described in the proof of Theorem
\ref{mytheorem12}.\\

\begin{figure}[h]
\begin{center}
\includegraphics[height=20pc,width=25pc]{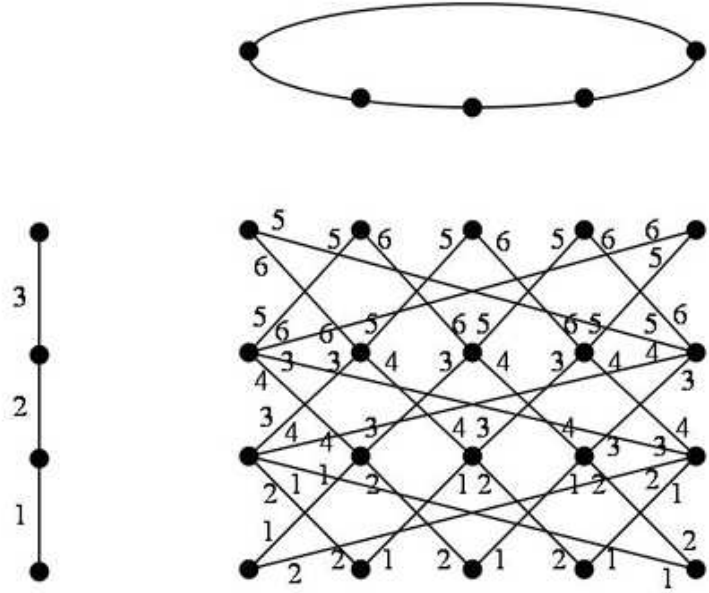}\
\caption{The interval $6$-coloring $\gamma$ of the graph
$P_{4}\times C_{5}$.}\label{example}
\end{center}
\end{figure}

Note that from Theorems \ref{mytheorem1} and  \ref{mytheorem12}, we
have the following result:

\begin{corollary}
\label{mycorollary4} (Pisanski, Shawe-Taylor, Mohar \cite{b31}) If
$G$ is $1$-factorable and $H$ is a regular graph, then $G\times H$
is also $1$-factorable.
\end{corollary}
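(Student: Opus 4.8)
The plan is to route everything through the equivalence, valid for regular graphs, between $1$-factorability and the condition $\chi'(G)=\Delta(G)$, and then to invoke Theorems \ref{mytheorem1} and \ref{mytheorem12}. First I would observe that a $1$-factorable graph is necessarily regular: if $E(G)$ decomposes into $d$ perfect matchings, then every vertex is incident to exactly one edge from each matching, so $G$ is $d$-regular; moreover this decomposition is itself a proper edge coloring with $d=\Delta(G)$ colors, so $\chi'(G)=\Delta(G)$. By part (1) of Theorem \ref{mytheorem1}, this gives $G\in\mathfrak{N}$.

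Next, since $H$ is regular by hypothesis, Theorem \ref{mytheorem12} applies and yields $G\times H\in\mathfrak{N}$. To convert this back into a statement about $1$-factorability I would then check that $G\times H$ is itself regular: writing $G$ as $d$-regular and $H$ as $r$-regular, the definition of the tensor product shows that the degree of any vertex $w_{p}^{(i)}$ equals $\deg_{G}(u_{i})\cdot\deg_{H}(v_{p})=d\cdot r$, so $G\times H$ is $(d\cdot r)$-regular.

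Having established that $G\times H$ is a regular graph lying in $\mathfrak{N}$, I would apply the \emph{other} direction of part (1) of Theorem \ref{mytheorem1} to conclude $\chi'(G\times H)=\Delta(G\times H)$. Finally, for a $k$-regular graph this equality means $E(G\times H)$ splits into $k$ color classes, each a matching; since the graph is $k$-regular and there are exactly $k$ colors, every vertex meets each color precisely once, so each color class is a perfect matching. Hence $G\times H$ is $1$-factorable, as desired.

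I do not anticipate a genuine obstacle: the corollary is essentially a translation of Theorem \ref{mytheorem12} through the dictionary ``a regular graph is $1$-factorable $\iff$ it is interval colorable $\iff$ $\chi'=\Delta$.'' The only point requiring a little care is the bookkeeping that \emph{both} $G$ and $G\times H$ are regular, so that Theorem \ref{mytheorem1} may legitimately be used in both directions.
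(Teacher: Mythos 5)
Your argument is correct and is precisely the route the paper intends: the corollary is stated as following ``from Theorems \ref{mytheorem1} and \ref{mytheorem12},'' i.e., translate $1$-factorability of a regular graph into $\chi'=\Delta$, pass through $\mathfrak{N}$ via Theorem \ref{mytheorem1}(1), apply Theorem \ref{mytheorem12}, and translate back using the regularity of $G\times H$. Your added bookkeeping (that $G$ and $G\times H$ are regular) is exactly the right care to take, and nothing is missing.
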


We showed that if $G\in \mathfrak{N}$ and $H$ is regular, then
$G\times H\in \mathfrak{N}$. Now we prove a similar result for the
strong tensor product of graphs.

\begin{theorem}
\label{mytheorem13} If $G\in \mathfrak{N}$ and $H$ is an $r$-regular
graph, then $G\otimes H\in \mathfrak{N}$. Moreover, $w(G\otimes
H)\leq w(G)\cdot (r+1)$ and $W(G\otimes H)\geq W(G)\cdot (r+1)$.
\end{theorem}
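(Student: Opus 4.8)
The plan is to mimic the proof of Theorem~\ref{mytheorem12} almost verbatim, the only difference being that the strong tensor product $G\otimes H$ adds, on top of the tensor-product edges, the ``vertical'' edges of the form $\bigl(w_p^{(i)},w_q^{(i)}\bigr)$ where $u_i=u_i$ and $(v_p,v_q)\in E(H)$ — that is, a copy of $H$ sitting inside each fibre over a vertex $u_i\in V(G)$. So the natural building block is no longer $K_2\times H$ but the \emph{strong tensor product} $K_2\otimes H$, and the key observation I would verify first is that $K_2\otimes H$ is an $(r+1)$-regular bipartite graph: over the two vertices of $K_2$ it has two copies of $H$ (the vertical edges, contributing $r$ to each degree) plus the $K_2\times H$ edges (contributing $1$ more, since $K_2\times H$ is $r$-regular but with the $K_2$ factor only one neighbour is available across the part). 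Wait — I must check the degree and bipartiteness carefully, since $H$ itself need not be bipartite; the point is that the product with $K_2$ forces a bipartition by the $K_2$-coordinate. I expect $K_2\otimes H$ to be $(r+1)$-regular and bipartite, so by Corollary~\ref{mycorollary1} it lies in $\mathfrak{N}$ with $w(K_2\otimes H)=r+1$.

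Granting that, let $\alpha$ be an interval $t$-coloring of $G$ and let $\beta$ be an interval $(r+1)$-coloring of $K_2\otimes H$, where I label
\[
V(K_2\otimes H)=\{x_1,\ldots,x_m,y_1,\ldots,y_m\}
\]
with $x$'s and $y$'s the two fibres. I would then define the coloring $\gamma$ of $G\otimes H$ by the same shift-and-stack formula,
\[
\gamma\bigl(\bigl(w_p^{(i)},w_q^{(j)}\bigr)\bigr)
=\bigl(\alpha\bigl((u_i,u_j)\bigr)-1\bigr)\cdot(r+1)+\beta\bigl((x_p,y_q)\bigr),
\]
for the edges coming from $(u_i,u_j)\in E(G)$, and for the vertical edges $\bigl(w_p^{(i)},w_q^{(i)}\bigr)$ (same $G$-coordinate $u_i$, with $(v_p,v_q)\in E(H)$) I would color them using the vertical copy of $H$ inside $K_2\otimes H$. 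The design principle is that each ``slab'' of $r+1$ consecutive colors $\{(\alpha-1)(r+1)+1,\ldots,(\alpha-1)(r+1)+(r+1)\}$ realizes one color class of $\alpha$, and the internal structure of that slab reproduces an interval coloring of $K_2\otimes H$; summing over the $t$ color classes gives a $t(r+1)$-coloring.

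The step I expect to be the main obstacle — and the reason the factor is $(r+1)$ rather than $r$ — is checking that at each vertex $w_p^{(i)}$ the set $S\bigl(w_p^{(i)},\gamma\bigr)$ is a genuine \emph{interval} of integers with no gaps. At such a vertex the incident edges split into (i) the vertical edges to $w_q^{(i)}$ for $(v_p,v_q)\in E(H)$, which I color by the vertical-copy part of $\beta$, and (ii) the diagonal edges to $w_q^{(j)}$ for each $u_j$ adjacent to $u_i$ in $G$. The colors of type (ii), grouped by the value $a=\alpha((u_i,u_j))$, occupy the blocks $\{(a-1)(r+1)+\beta\text{-values}\}$, and because $\alpha$ is an interval coloring these blocks are consecutive in $a$; within each block the $\beta$-values must dovetail with the vertical colors so that the whole union is gap-free. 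This is precisely the consistency condition guaranteed by $\beta$ being an interval $(r+1)$-coloring of $K_2\otimes H$ and $\alpha$ an interval coloring of $G$, and verifying it amounts to a routine (but careful) bookkeeping check of the form already carried out in Theorem~\ref{mytheorem12}. Once interval-ness at every vertex is confirmed, the bounds $w(G\otimes H)\le w(G)\cdot(r+1)$ and $W(G\otimes H)\ge W(G)\cdot(r+1)$ follow immediately by taking $\alpha$ to be an optimal coloring of $G$ with $t=w(G)$ or $t=W(G)$, exactly as before.
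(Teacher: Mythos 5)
Your overall plan --- build an interval $(r+1)$-coloring $\beta$ of $K_{2}\otimes H$ via Corollary \ref{mycorollary1} and then stack $t$ slabs of $r+1$ colors indexed by the color classes of $\alpha$ using the formula $\gamma\left(\left(w_{p}^{(i)},w_{q}^{(j)}\right)\right)=\left(\alpha\left(\left(u_{i},u_{j}\right)\right)-1\right)\cdot(r+1)+\beta\left(\left(x_{p},y_{q}\right)\right)$ --- is exactly the paper's proof, and that construction is correct. However, you have misread the definition of the strong tensor product, and the error propagates through every verification you sketch. The edges that $G\otimes H$ adds to $G\times H$ are \emph{not} copies of $H$ inside each fibre (those would be the edges with $u_{1}=u_{2}$ and $(v_{1},v_{2})\in E(H)$, which belong to the strong product $G\boxtimes H$ of Theorem \ref{mytheorem14}); they are the edges $\left(w_{p}^{(i)},w_{p}^{(j)}\right)$ with $v_{p}=v_{p}$ and $(u_{i},u_{j})\in E(G)$, i.e.\ one perfect matching between the fibres of $u_{i}$ and $u_{j}$ for each edge of $G$. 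Consequently $K_{2}\otimes H$ is $K_{2}\times H$ (contributing $r$ to every degree, not $1$) plus the single perfect matching $\left\{(x_{i},y_{i})\right\}$ (contributing $1$, not $r$); that is why it is $(r+1)$-regular, and it is bipartite precisely because \emph{every} edge crosses between the two fibres. Under your reading --- a copy of $H$ sitting inside each fibre --- the graph would be $2r$-regular and would fail to be bipartite whenever $H$ is not, so your stated justification actually contradicts the conclusion you need.

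The same misreading leaves a hole in the coloring itself: you set aside a class of ``vertical edges $\left(w_{p}^{(i)},w_{q}^{(i)}\right)$ with the same $G$-coordinate'' to be colored ``using the vertical copy of $H$ inside $K_{2}\otimes H$,'' but no such edges exist in $G\otimes H$ and no such copy of $H$ exists in $K_{2}\otimes H$, so this clause colors nothing and justifies nothing. Meanwhile the edges that genuinely need the $(r+1)$-st color in each slab --- the matching edges $\left(w_{p}^{(i)},w_{p}^{(j)}\right)$ --- are absorbed into your main formula with $p=q$, which happens to be exactly what the paper does, but you never identify them or observe that at each vertex $x_{p}$ of $K_{2}\otimes H$ the color $\beta\left(\left(x_{p},y_{p}\right)\right)$ together with the colors of the $K_{2}\times H$ edges fills out the interval $\{1,\ldots,r+1\}$, which is the fact that makes each slab gap-free. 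Correct the definition and the rest of your argument goes through verbatim and coincides with the paper's.
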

\begin{proof} Let $V(G)=\{u_{1},u_{2},\ldots,u_{n}\}$,
$V(H)=\{v_{1},v_{2},\ldots,v_{m}\}$ and

\begin{center}
$V\left(G\otimes H\right)=\left\{w_{j}^{(i)}|~1\leq i\leq n, 1\leq
j\leq m\right\}$,
\end{center}

\begin{center}
$E\left(G\otimes H\right)=E\left(G\times
H\right)\cup\left\{\left(w_{p}^{(i)},w_{p}^{(j)}\right)|~1\leq p\leq
m~and~\left(u_{i},u_{j}\right)\in E(G)\right\}$.
\end{center}

Let us consider the graph $K_{2}\otimes H$. Clearly, $K_{2}\otimes
H$ is an $(r+1)$-regular bipartite graph, thus, by Corollary
\ref{mycorollary1}, $K_{2}\otimes H\in \mathfrak{N}$ and
$w\left(K_{2}\otimes H\right)=r+1$. Let $\alpha$ be an interval
$t$-coloring of the graph $G$, $\beta$ be an interval
$(r+1)$-coloring of the graph $K_{2}\otimes H$ and

\begin{center}
$V\left(K_{2}\otimes
H\right)=\{x_{1},x_{2},\ldots,x_{m},y_{1},y_{2},\ldots,y_{m}\}$,
\end{center}

\begin{center}
$E\left(K_{2}\otimes H\right)=\left\{\left(x_{i},y_{i}\right)|~1\leq
i\leq m\right\}\cup E\left(K_{2}\times H\right)$.
\end{center}

Define an edge coloring $\gamma$ of the graph $G\otimes H$ in the
following way:

for every $\left(w_{p}^{(i)},w_{q}^{(j)}\right)\in E(G\otimes H)$

\begin{center}
$\gamma
\left(\left(w_{p}^{(i)},w_{q}^{(j)}\right)\right)=\left(\alpha
\left(\left(u_{i},u_{j}\right)\right)-1\right)\cdot (r+1)+\beta
\left(\left(x_{p},y_{q}\right)\right)$,
\end{center}
where $1\leq i\leq n, 1\leq j\leq n, 1\leq p\leq m, 1\leq q\leq
m$.\\

It is not difficult to see that $\gamma$ is an interval $t\cdot
(r+1)$-coloring of the graph $G\otimes H$. By the definition of
$\gamma$, we have $w(G\otimes H)\leq w(G)\cdot (r+1)$ and
$W(G\otimes H)\geq W(G)\cdot (r+1)$. ~$\square$
\end{proof}

The Figure \ref{example1} shows the interval $9$-coloring $\gamma$
of the graph $P_{4}\otimes C_{5}$ described in the proof of Theorem
\ref{mytheorem13}.\\

\begin{figure}[h]
\begin{center}
\includegraphics[height=20pc,width=25pc]{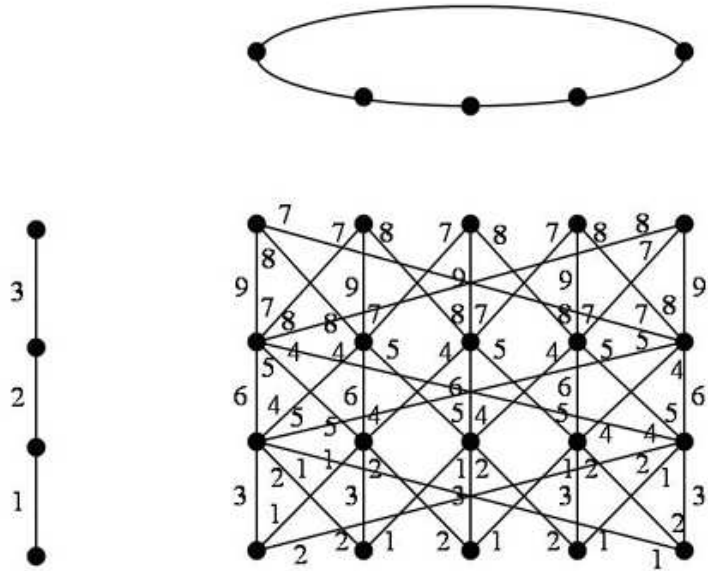}\
\caption{The interval $9$-coloring $\gamma$ of the graph
$P_{4}\otimes C_{5}$.}\label{example1}
\end{center}
\end{figure}

Note that from Theorems \ref{mytheorem1} and \ref{mytheorem13}, we
have the following result:

\begin{corollary}
\label{mycorollary5} (Pisanski, Shawe-Taylor, Mohar \cite{b31}) If
$G$ is $1$-factorable and $H$ is a regular graph, then $G\otimes H$
is also $1$-factorable.
\end{corollary}

Next, we consider interval edge colorings of the strong product of
graphs. In \cite{b25} Kubale and Giaro noted that there are graphs
$G,H\in \mathfrak{N}$, such that $G\boxtimes H\notin \mathfrak{N}$.
Here, we prove that if two graphs belong to $\mathfrak{N}$ and one
of them is regular, then $G\boxtimes H\in \mathfrak{N}$.

\begin{theorem}
\label{mytheorem14} If $G,H\in \mathfrak{N}$ and $H$ is an
$r$-regular graph, then $G\boxtimes H\in \mathfrak{N}$. Moreover,
$w(G\boxtimes H)\leq w(G)\cdot (r+1)+r$ and $W(G\boxtimes H)\geq
W(G)\cdot (r+1)+r$.
\end{theorem}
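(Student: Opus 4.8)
The plan is to follow the same template that worked for the tensor and strong tensor products (Theorems~\ref{mytheorem12} and~\ref{mytheorem13}), but now the combining gadget must absorb edges of \emph{both} the ``$G$-type'' and the ``$H$-type'', which is exactly what makes the strong product larger. First I would fix notation as before: let $V(G)=\{u_{1},\ldots,u_{n}\}$, $V(H)=\{v_{1},\ldots,v_{m}\}$, and write the vertices of $G\boxtimes H$ as $w_{j}^{(i)}$ with $1\le i\le n$, $1\le j\le m$. The edge set of $G\boxtimes H$ decomposes into three parts: the ``diagonal'' edges where $(u_{i},u_{j})\in E(G)$ and $(v_{p},v_{q})\in E(H)$ together with the ``vertical'' edges where $u_{i}=u_{j}$ and $(v_{p},v_{q})\in E(H)$ (these two together are exactly $E(G\boxtimes H)$ restricted to a fixed pair of $G$-vertices, i.e.\ the edges of the strong product that come from a single $K_{2}$ or single vertex in the first coordinate), and the ``horizontal'' edges where $v_{p}=v_{q}$ and $(u_{i},u_{j})\in E(G)$.

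The key idea is to build the colour assignment as a block construction driven by an interval colouring $\alpha$ of $G$ and an interval colouring $\beta$ of the gadget $K_{2}\boxtimes H$. I would first observe that $K_{2}\boxtimes H$ is a $(2r+1)$-regular bipartite graph: over the two $K_{2}$-vertices, each copy of $H$ contributes degree $r$ inside its own side via... no, more carefully, $K_{2}\boxtimes H$ has each vertex adjacent to its $r$ $H$-neighbours on the \emph{same} side (the $u_{1}=u_{2}$ edges), its $r$ $H$-neighbours on the \emph{other} side (diagonal edges), and its one mirror vertex on the other side (the $v_{p}=v_{q}$ edge), giving degree $2r+1$. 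But that count is not bipartite-friendly because the same-side $H$-edges live within one part. The cleaner route, matching the $(r+1)$ factor the theorem advertises, is to use the gadget $K_{2}\boxtimes H$ only to colour the diagonal-plus-vertical edges (an $(r+1)$-regular bipartite graph, by Corollary~\ref{mycorollary1} interval colourable with $w=r+1$), scale by $\alpha$ exactly as in Theorem~\ref{mytheorem13} to get colours in an interval of length $w(G)\cdot(r+1)$, and then handle the horizontal $v_{p}=v_{q}$ edges separately using an interval colouring $\delta$ of $H$ itself (here is where the hypothesis $H\in\mathfrak{N}$ is used, beyond mere regularity), shifting those colours to sit immediately above the block so the total palette has length $w(G)\cdot(r+1)+r$.

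Concretely I would set, for a diagonal/vertical edge, $\gamma\!\left(\left(w_{p}^{(i)},w_{q}^{(j)}\right)\right)=\bigl(\alpha\left(\left(u_{i},u_{j}\right)\right)-1\bigr)\cdot(r+1)+\beta\left(\left(x_{p},y_{q}\right)\right)$ verbatim from Theorem~\ref{mytheorem13}, and for a horizontal edge (where $v_{p}=v_{q}$ and $(u_{i},u_{j})\in E(G)$) assign a colour of the form $\bigl(\alpha\left(\left(u_{i},u_{j}\right)\right)-?\bigr)\cdot(r+1)+$ (something built from an $H$-colouring), chosen so that the interval condition at each vertex is preserved. The main obstacle, and the step I would spend the most care on, is verifying the interval property at a vertex $w_{p}^{(i)}$: one must check that the colours coming from the three edge types interleave into a single gap-free block of consecutive integers. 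The diagonal and vertical edges at $w_{p}^{(i)}$ already form an interval of length $r+1$ within block $\alpha(u_{i})$ by the gadget colouring, but the horizontal edges from $G$-neighbours of $u_{i}$ carry colours tied to \emph{their} $\alpha$-values, so the shifts must be calibrated so these horizontal colours abut the diagonal/vertical block from both ends without overlap. This is precisely the delicate alignment that forces the extra ``$+r$'' in the palette and that I expect to require the full strength of $H\in\mathfrak{N}$; once the alignment is pinned down the bounds $w(G\boxtimes H)\le w(G)\cdot(r+1)+r$ and $W(G\boxtimes H)\ge W(G)\cdot(r+1)+r$ follow by reading off the extreme colours, exactly as in the two preceding theorems.
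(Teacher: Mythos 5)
Your overall template---reduce to the Theorem~\ref{mytheorem13} construction plus one extra edge class stacked on top, with the hypothesis $H\in\mathfrak{N}$ consumed by that extra class---is exactly the paper's strategy, but you have attached the two roles to the wrong edge classes, and as written the construction does not go through. The class that decomposes into $(r+1)$-regular bipartite gadgets (one copy of $K_2\otimes H$ over each edge of $G$) is \emph{diagonal plus horizontal} ($v_p=v_q$), i.e.\ exactly $E(G\otimes H)$, the edge set already handled by Theorem~\ref{mytheorem13}. Your ``diagonal-plus-vertical'' class is not such a union: a vertical edge has $u_i=u_j$ and is not associated with any edge of $G$, so the formula $\left(\alpha\left(\left(u_i,u_j\right)\right)-1\right)(r+1)+\beta\left(\left(x_p,y_q\right)\right)$ that you propose to apply ``verbatim'' is undefined on it ($\alpha$ colors $E(G)$ and $(u_i,u_i)\notin E(G)$); moreover at $w_p^{(i)}$ this class contributes $(\deg_G(u_i)+1)\cdot r$ edges, not $r+1$, and inside the gadget $K_2\boxtimes H$ the diagonal-plus-vertical part is $2r$-regular with the two vertical copies of $H$ lying inside the two sides, so Corollary~\ref{mycorollary1} does not apply to it.

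Symmetrically, the class you reserve for ``an interval colouring $\delta$ of $H$''---the horizontal edges $v_p=v_q$, $(u_i,u_j)\in E(G)$---consists of $|V(H)|$ disjoint copies of $G$, not of $H$; no coloring of $H$ can be transferred to it, and there are $\deg_G(u_i)$ such edges at $w_p^{(i)}$ rather than $r$. The paper does the opposite: it colors all of $E(G\otimes H)$ (diagonal and horizontal together) by the Theorem~\ref{mytheorem13} rule, and then colors the $i$-th \emph{vertical} copy $H_i$ of $H$ properly with the $r$ colors $\max S(u_i,\alpha)\cdot(r+1)+1,\ldots,\max S(u_i,\alpha)\cdot(r+1)+r$, using $\chi'(H)=\Delta(H)=r$ (this is where $H\in\mathfrak{N}$ enters, via Theorem~\ref{mytheorem1}). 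With that placement the interval check is immediate, since the $r$ new colors sit flush on top of the interval $[(\min S(u_i,\alpha)-1)(r+1)+1,\,\max S(u_i,\alpha)\cdot(r+1)]$ already present at every vertex of $V^i(H)$, whereas the alignment you defer (the ``$-?$'' shift for your horizontal edges) has no consistent solution, because those edges would have to interleave with blocks indexed by several distinct $\alpha$-values of the $G$-edges at $u_i$, all of which are already fully occupied by diagonal colors.
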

\begin{proof} Let $V(G)=\{u_{1},u_{2},\ldots,u_{n}\}$,
$V(H)=\{v_{1},v_{2},\ldots,v_{m}\}$ and

\begin{center}
$V\left(G\boxtimes H\right)=\bigcup_{i=1}^{n}V^{i}(H)$, where
$V^{i}(H)=\left\{w_{j}^{(i)}|~1\leq j\leq m\right\}$,
\end{center}

\begin{center}
$E\left(G\boxtimes H\right)=E\left(G\otimes H\right)\cup
\bigcup_{i=1}^{n}E^{i}(H)$, where
\end{center}

\begin{center}
$E^{i}(H)=\left\{\left(w_{p}^{(i)},w_{q}^{(i)}\right)|~\left(v_{p},v_{q}\right)\in
E(H)\right\}$.
\end{center}

For $i=1,2,\ldots,n$, define a graph $H_{i}$ as follows:
\begin{center}
$H_{i}=\left(V^{i}(H),E^{i}(H)\right)$.
\end{center}

First of all note that $\chi ^{\prime}(H)=\Delta(H)=r$ since $H\in
\mathfrak{N}$ and $H$ is an $r$-regular graph. This implies that
there exists an interval $r$-coloring of the graph $H$. Let us
consider the graph $K_{2}\otimes H$. Clearly, $K_{2}\otimes H$ is an
$(r+1)$-regular bipartite graph, thus, by Corollary
\ref{mycorollary1}, $K_{2}\otimes H\in \mathfrak{N}$ and
$w\left(K_{2}\otimes H\right)=r+1$. Let $\alpha$ be an interval
$t$-coloring of the graph $G$, $\beta$ be an interval
$(r+1)$-coloring of the graph $K_{2}\otimes H$.

Define an edge coloring $\gamma$ of the graph $G\boxtimes H$ in the
following way:

\begin{description}
\item[(1)] for every $\left(w_{p}^{(i)},w_{q}^{(j)}\right)\in E(G\otimes H)$

\begin{center}
$\gamma
\left(\left(w_{p}^{(i)},w_{q}^{(j)}\right)\right)=\left(\alpha
\left(\left(u_{i},u_{j}\right)\right)-1\right)\cdot (r+1)+\beta
\left(\left(x_{p},y_{q}\right)\right)$,
\end{center}
where $1\leq i\leq n, 1\leq j\leq n, 1\leq p\leq m, 1\leq q\leq m$.

\item[(2)] for $i=1,2,\ldots,n$, the edges of the subgraph $H_{i}$ we color properly with colors
\begin{center}
$\max S\left(u_{i},\alpha\right)\cdot(r+1)+1,\max
S\left(u_{i},\alpha\right)\cdot(r+1)+2,\ldots,\max
S\left(u_{i},\alpha\right)\cdot(r+1)+r$
\end{center}
\end{description}\

It is easy to see that $\gamma$ is an interval $(t\cdot
(r+1)+r)$-coloring of the graph $G\boxtimes H$. By the definition of
$\gamma$, we have $w(G\boxtimes H)\leq w(G)\cdot (r+1)+r$ and
$W(G\boxtimes H)\geq W(G)\cdot (r+1)+r$. ~$\square$
\end{proof}

The Figure \ref{example2} shows the interval $11$-coloring $\gamma$
of the graph $P_{4}\boxtimes C_{4}$ described in the proof of
Theorem \ref{mytheorem14}.\\

\begin{figure}[h]
\begin{center}
\includegraphics[height=20pc,width=25pc]{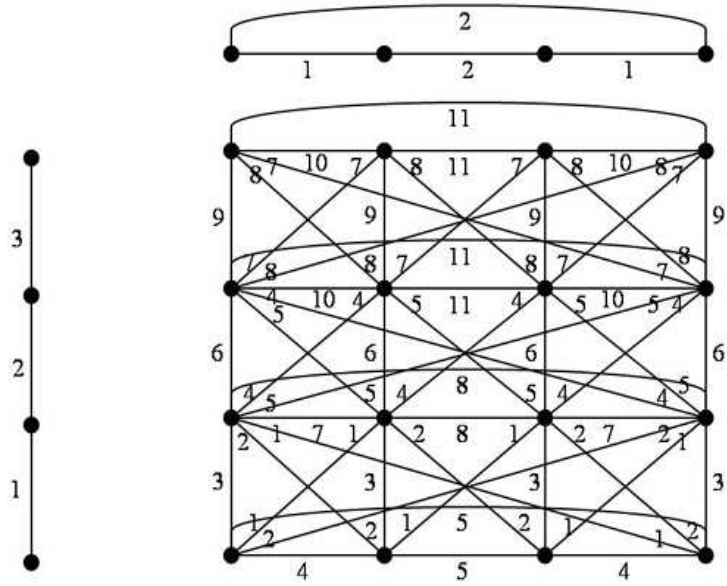}\
\caption{The interval $11$-coloring $\gamma$ of the graph
$P_{4}\boxtimes C_{4}$.}\label{example2}
\end{center}
\end{figure}

Note that there are graphs $G$ and $H$ for which $G\boxtimes H\in
\mathfrak{N}$, but $G\in \mathfrak{N},H\notin\mathfrak{N}$. For
example, $K_{2}\boxtimes C_{3}\in \mathfrak{N}$, but $C_{3}\notin
\mathfrak{N}$. For regular graphs the following result was obtained
by Zhou \cite{b38}.

\begin{theorem} \label{mytheorem15} If $G$ is $1$-factorable and $H$ is a regular
graph, then $G\boxtimes H$ is also $1$-factorable.
\end{theorem}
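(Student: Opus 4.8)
The plan is to reduce everything to the special case $G=K_{2}$ and then spread that case over a $1$-factorization of $G$. First I would record the standard reformulation: a regular graph is $1$-factorable precisely when $\chi^{\prime}=\Delta$, which by Theorem \ref{mytheorem1} is the same as being regular and lying in $\mathfrak{N}$. Thus $G$ being $1$-factorable means $G$ is $k$-regular for some $k$ with $E(G)=M_{1}\cup\cdots\cup M_{k}$ a partition into perfect matchings, while $H$ is $r$-regular. Since $G\boxtimes H$ is $\left((k+1)(r+1)-1\right)$-regular, the target is a partition of $E(G\boxtimes H)$ into $k(r+1)+r$ perfect matchings (note $k(r+1)+r=(k+1)(r+1)-1$).

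Next I would split $E(G\boxtimes H)$ into the \emph{horizontal} edges (the $n$ vertex-disjoint copies of $H$, one inside each layer $V^{i}(H)$) and the edges of $G\otimes H$ (the vertical and diagonal edges). The crucial structural observation is local: over a single edge $(u_{i},u_{j})\in E(G)$, the vertical and diagonal edges joining layers $i$ and $j$ form a copy of $K_{2}\otimes H$, and adjoining the two horizontal copies $H_{i},H_{j}$ completes this to a copy of $K_{2}\boxtimes H$ on those two layers. So I only need two base facts, valid for every $r$-regular $H$: that $K_{2}\otimes H$ and $K_{2}\boxtimes H$ are $1$-factorable.

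Both base facts follow from results already in the paper. The graph $K_{2}\otimes H$ is $(r+1)$-regular and bipartite, hence $1$-factorable by Corollary \ref{mycorollary1}. For $K_{2}\boxtimes H$ I would use the edge-disjoint decomposition $E(K_{2}\boxtimes H)=E(H\square K_{2})\sqcup E(K_{2}\times H)$: the prism $H\square K_{2}$ is regular and, since $K_{2}\in\mathfrak{N}$, Corollary \ref{mycorollary2} gives $w(H\square K_{2})=\Delta=r+1$, i.e.\ a $1$-factorization into $r+1$ perfect matchings; the double cover $K_{2}\times H$ is $1$-factorable by Corollary \ref{mycorollary4} (taking the first factor to be $K_{2}$), giving $r$ further perfect matchings. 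Together these furnish the required $2r+1$ perfect matchings of $K_{2}\boxtimes H$.

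Finally I would assemble. Because each $M_{s}$ is a perfect matching of $G$, it pairs up \emph{all} layers. Over $M_{1}$ I take the $K_{2}\boxtimes H$ bundles (which contain every horizontal copy of $H$) and, choosing one perfect matching from each of these vertex-disjoint bundles, obtain $2r+1$ perfect matchings of $G\boxtimes H$ covering all horizontal edges together with the vertical and diagonal edges lying over $M_{1}$. Over each $M_{s}$ with $s\geq 2$ I use the $K_{2}\otimes H$ bundles to produce $r+1$ perfect matchings covering the vertical and diagonal edges over $M_{s}$. This yields $(2r+1)+(k-1)(r+1)=k(r+1)+r$ perfect matchings, each edge appearing exactly once. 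The main obstacle is precisely that the horizontal copies of $H$ need not be $1$-factorable on their own (indeed $H$ need not lie in $\mathfrak{N}$), so they cannot be handled in isolation; the heart of the argument is to absorb all of them into the $K_{2}\boxtimes H$ bundles sitting over the single perfect matching $M_{1}$, which is exactly where the hypothesis that $G$ (rather than $H$) is $1$-factorable is used.
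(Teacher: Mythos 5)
Your proof is correct. Note, however, that the paper does not prove this statement at all: it records it as a known theorem of Zhou \cite{b38} and moves on, so there is no "paper proof" to match. What you have written is a self-contained derivation assembled entirely from ingredients already present in the paper, and every step checks out: the degree count $(k+1)(r+1)-1=k(r+1)+r$; the identification of the between-layer edges over an edge of $G$ with $K_{2}\otimes H$ and, after adjoining the two horizontal copies of $H$, with $K_{2}\boxtimes H$; the two base facts ($K_{2}\otimes H$ is $(r+1)$-regular bipartite, hence $1$-factorable by Corollary \ref{mycorollary1}, and $E(K_{2}\boxtimes H)=E(K_{2}\square H)\sqcup E(K_{2}\times H)$ with the prism handled by Kotzig's theorem via Corollary \ref{mycorollary2} and the double cover by K\"onig); and the final bookkeeping, where the vertex-disjointness and spanning property of the bundles over each perfect matching $M_{s}$ make the union of one $1$-factor per bundle a $1$-factor of $G\boxtimes H$, for a total of $(2r+1)+(k-1)(r+1)=k(r+1)+r$ factors. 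Your closing remark correctly isolates the one genuinely delicate point --- the horizontal copies of $H$ need not be $1$-factorable on their own, so they must be absorbed into the $K_{2}\boxtimes H$ bundles over a single perfect matching of $G$ --- and this is exactly where the hypothesis on $G$ rather than $H$ is used. The only implicit assumption worth flagging is that $G$ has at least one $1$-factor (i.e.\ $k\geq 1$), which is the standard reading of ``$1$-factorable'' and is needed for $M_{1}$ to exist. Compared with simply citing Zhou, your argument has the merit of making Theorem \ref{mytheorem15} a corollary of the paper's own edge-coloring machinery, in the same spirit as the proofs of Theorems \ref{mytheorem12}--\ref{mytheorem14}.
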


\begin{corollary}
\label{mycorollary6} Let $G$ and $H$ be two regular graphs and $G\in
\mathfrak{N}$. Then $G\boxtimes H\in \mathfrak{N}$.
\end{corollary}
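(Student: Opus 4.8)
The plan is to avoid the explicit colouring construction of Theorem~\ref{mytheorem14} entirely: that theorem requires \emph{both} factors to lie in $\mathfrak{N}$, whereas here we only know that $G\in\mathfrak{N}$ and that $H$ is regular. Instead I would route the argument through Zhou's $1$-factorability result (Theorem~\ref{mytheorem15}), translating membership in $\mathfrak{N}$ back and forth into the $1$-factorability condition by means of part~(1) of Theorem~\ref{mytheorem1}, which is available because every graph in sight is regular.

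First I would observe that $G$ is regular and $G\in\mathfrak{N}$, so Theorem~\ref{mytheorem1}(1) gives $\chi^{\prime}(G)=\Delta(G)$. For a regular graph this is exactly the assertion that $G$ is $1$-factorable: a proper $\Delta(G)$-edge-colouring partitions $E(G)$ into $\Delta(G)$ colour classes, each of which is a perfect matching. With $G$ now known to be $1$-factorable and $H$ regular, Theorem~\ref{mytheorem15} applies directly and yields that $G\boxtimes H$ is $1$-factorable as well.

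It remains to push this conclusion back to $\mathfrak{N}$. The strong product of two regular graphs is again regular: if $G$ is $\Delta(G)$-regular and $H$ is $\Delta(H)$-regular, then every vertex of $G\boxtimes H$ has degree $(\Delta(G)+1)(\Delta(H)+1)-1$, so $G\boxtimes H$ is regular. Being $1$-factorable, it satisfies $\chi^{\prime}(G\boxtimes H)=\Delta(G\boxtimes H)$, and so applying Theorem~\ref{mytheorem1}(1) in the reverse direction gives $G\boxtimes H\in\mathfrak{N}$, as required.

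There is no genuine technical obstacle here; the entire content is the observation that for regular graphs the three properties ``lies in $\mathfrak{N}$'', ``$\chi^{\prime}=\Delta$'', and ``$1$-factorable'' coincide, which lets Zhou's structural theorem supply exactly what the interval-colouring construction of Theorem~\ref{mytheorem14} cannot (since we are not assuming $H\in\mathfrak{N}$). The only point requiring a moment's care is confirming that $G\boxtimes H$ is indeed regular, so that Theorem~\ref{mytheorem1}(1) is legitimately applicable to it.
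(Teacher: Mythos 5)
Your argument is exactly the one the paper intends: Corollary~\ref{mycorollary6} is stated immediately after Zhou's Theorem~\ref{mytheorem15} precisely so that it follows by translating, for regular graphs, between membership in $\mathfrak{N}$, the condition $\chi^{\prime}=\Delta$, and $1$-factorability via Theorem~\ref{mytheorem1}(1). Your additional check that $G\boxtimes H$ is regular (of degree $(\Delta(G)+1)(\Delta(H)+1)-1$) is correct and is the only detail the paper leaves implicit.
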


Finally, we turn our attention to interval edge colorings of the
lexicographic product of graphs. In \cite{b25} Kubale and Giaro
posed the following question:

\begin{problem}
Does $G[H]\in \mathfrak{N}$ if $G,H\in \mathfrak{N}$?
\end{problem}

We start by focusing on the special case of this problem, when $G\in
\mathfrak{N}$ and $H=nK_{1}$ for any $n\in \mathbf{N}$.

\begin{theorem}
\label{mytheorem16} If $G\in \mathfrak{N}$, then $G[nK_{1}]\in
\mathfrak{N}$ for any $n\in \mathbf{N}$. Moreover, $w(G[nK_{1}])\leq
w(G)\cdot n$ and $W(G[nK_{1}])\geq (W(G)+1)\cdot n - 1$.
\end{theorem}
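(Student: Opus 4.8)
The plan is to realize $G[nK_{1}]$ as a blow-up of $G$: writing $V(G)=\{u_{1},\ldots,u_{N}\}$ and denoting by $w_{p}^{(i)}$ the copy of $u_{i}$ indexed by $p\in\{1,\ldots,n\}$, the vertex $u_{i}$ is replaced by the independent set $\{w_{1}^{(i)},\ldots,w_{n}^{(i)}\}$ and every edge $(u_{i},u_{j})\in E(G)$ is replaced by a complete bipartite graph $K_{n,n}$ joining these two sets; in particular $\deg_{G[nK_{1}]}(w_{p}^{(i)})=n\cdot\deg_{G}(u_{i})$. The whole construction will lift a given interval coloring $\alpha$ of $G$ to $G[nK_{1}]$ by coloring, for each edge $(u_{i},u_{j})$ with $\alpha((u_{i},u_{j}))=c$, the corresponding $K_{n,n}$ inside a block of colors attached to $c$, in such a way that the blocks glue into an interval at every vertex.

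For the bound $w(G[nK_{1}])\leq w(G)\cdot n$ I would take $\alpha$ to be an interval $w(G)$-coloring and, for an edge colored $c$, color $K_{n,n}$ with the $n$ colors $\{(c-1)n+1,\ldots,cn\}$ by a Latin-square (rainbow) rule, e.g.
$$\gamma\left(\left(w_{p}^{(i)},w_{q}^{(j)}\right)\right)=(c-1)n+\left((p+q-2)\bmod n\right)+1.$$
Then every copy $w_{p}^{(i)}$ sees the whole block $\{(c-1)n+1,\ldots,cn\}$ for each color $c$ incident to $u_{i}$; since those colors form an interval $[a_{i},b_{i}]$ at $u_{i}$, the colors at $w_{p}^{(i)}$ form the interval $[(a_{i}-1)n+1,\,b_{i}n]$, and the total palette is $\{1,\ldots,w(G)\cdot n\}$.

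For the bound $W(G[nK_{1}])\geq (W(G)+1)n-1$ I would instead fix an ordering $u_{1}<\cdots<u_{N}$ and, taking $\alpha$ to be an interval $W(G)$-coloring, color each edge $(u_{i},u_{j})$ with $i<j$ and $\alpha((u_{i},u_{j}))=c$ by the diagonal rule
$$\gamma\left(\left(w_{p}^{(i)},w_{q}^{(j)}\right)\right)=(c-1)n+p+q-1.$$
The point is that from the side of $u_{i}$ the copy $w_{p}^{(i)}$ sees $[(c-1)n+p,\,(c-1)n+p+n-1]$, while from the side of the larger endpoint the copy $w_{q}^{(j)}$ sees $[(c-1)n+q,\,(c-1)n+q+n-1]$; thus the \emph{copy index} enters identically regardless of orientation. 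Hence at a fixed $w_{p}^{(i)}$ the sub-intervals inherited from the colors $c\in[a_{i},b_{i}]$ all start at $(c-1)n+p$ and have width $n$, so consecutive ones abut exactly and chain into the single interval $[(a_{i}-1)n+p,\,b_{i}n+p-1]$. The global palette is the union over $c\in\{1,\ldots,W(G)\}$ of the blocks $\{(c-1)n+1,\ldots,(c+1)n-1\}$, which overlap for $n\geq 2$ and hence cover $\{1,\ldots,(W(G)+1)n-1\}$, the top color being attained at $c=W(G)$, $p=q=n$.

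Properness and the fact that $\gamma$ is a valid interval coloring are routine in both cases: edges at a common vertex $w_{p}^{(i)}$ either lie in the same $K_{n,n}$ (where the rules are proper by inspection) or in $K_{n,n}$'s attached to distinct colors $c$, whose color sets are disjoint since $\alpha$ is proper at $u_{i}$. The main obstacle, and the step I would check most carefully, is the lower-bound construction: one must verify that the diagonal rule is orientation-consistent so that the vertex-wise windows align, and that, although the global blocks of width $2n-1$ overlap, each vertex still sees a gap-free interval because its width-$n$ windows are laid end to end. For $n=1$ the statement is trivial, since $G[1\cdot K_{1}]=G$.
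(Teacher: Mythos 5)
Your proposal is correct and takes essentially the same route as the paper: the paper's upper bound uses the same Latin-square block rule (written as $(c-1)n+\bigl((p+q-1)\bmod n\bigr)$ with the antidiagonal $p+q=n+1$ sent to $cn$), and its lower bound uses exactly your diagonal rule $(c-1)n+p+q-1$, with the same abutting-windows analysis at each vertex. The vertex ordering you introduce for the lower bound is superfluous, since, as you yourself note, the diagonal rule is symmetric in the two endpoints.
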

\begin{proof} Let $V(G)=\{u_{1},u_{2},\ldots,u_{m}\}$ and

\begin{center}
$V\left(G[nK_{1}]\right)=\left\{v_{j}^{(i)}|~1\leq i\leq m, 1\leq
j\leq n\right\}$,
\end{center}

\begin{center}
$E\left(G[nK_{1}]\right)=\left\{\left(v_{p}^{(i)},v_{q}^{(j)}\right)|~\left(u_{i},u_{j}\right)\in
E(G)~and~p,q=1,2,\ldots,n\right\}$.
\end{center}

Let $\alpha$ be an interval $t$-coloring of the graph $G$.

Define an edge coloring $\beta$ of the graph $G[nK_{1}]$ in the
following way:

for every $\left(v_{p}^{(i)},v_{q}^{(j)}\right)\in
E\left(G[nK_{1}]\right)$

\begin{center}
$\beta\left(\left(v_{p}^{(i)},v_{q}^{(j)}\right)\right)= \left\{
\begin{tabular}{ll}
$\left(\alpha((u_{i},u_{j}))-1\right)\cdot n + p+q-1\pmod {n}$, if $p+q\neq n+1$,\\
$\alpha((u_{i},u_{j}))\cdot n$, if $p+q=n+1$.\\
\end{tabular}%
\right.$
\end{center}
where $1\leq i\leq m, 1\leq j\leq m, 1\leq p\leq n, 1\leq q\leq
n$.\\

It can be verified that $\beta$ is an interval $t\cdot n$-coloring
of the graph $G[nK_{1}]$. By the definition of $\beta$, we have
$w(G[nK_{1}])\leq w(G)\cdot n$.

Now we show that $W(G[nK_{1}])\geq (W(G)+1)\cdot n - 1$.

Let $\phi$ be an interval $W(G)$-coloring of the graph $G$.

Define an edge coloring $\psi$ of the graph $G[nK_{1}]$ in the
following way:

for every $\left(v_{p}^{(i)},v_{q}^{(j)}\right)\in
E\left(G[nK_{1}]\right)$

\begin{center}
$\psi\left(\left(v_{p}^{(i)},v_{q}^{(j)}\right)\right)=
\left(\phi((u_{i},u_{j}))-1\right)\cdot n + p+q-1$,
\end{center}
where $1\leq i\leq m, 1\leq j\leq m, 1\leq p\leq n, 1\leq q\leq
n$.\\

It is easy to see that $\psi$ is an interval $(W(G)\cdot
n+n-1)$-coloring of the graph $G[nK_{1}]$. ~$\square$
\end{proof}

The Figure \ref{example3} shows the interval $6$-coloring $\beta$ of
the graph $(K_{1,3}+e)[2K_{1}]$ described in the proof of Theorem
\ref{mytheorem16}.\\

\begin{figure}[h]
\begin{center}
\includegraphics[height=20pc,width=30pc]{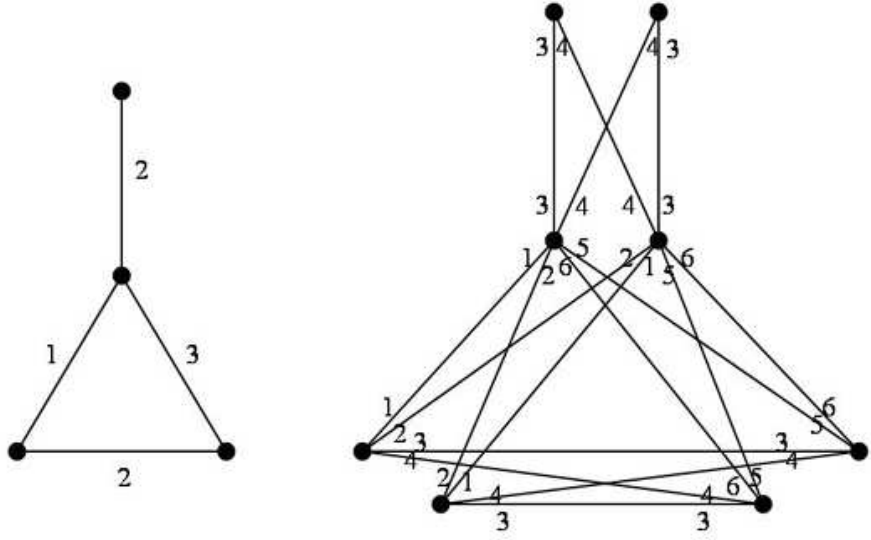}\
\caption{The interval $6$-coloring $\beta$ of the graph
$(K_{1,3}+e)[2K_{1}]$.}\label{example3}
\end{center}
\end{figure}

\begin{corollary}
\label{mycorollary7} (Kamalian, Petrosyan \cite{b22}) If $k$ is
even, then $C_{k}[nK_{1}]\in \mathfrak{N}$ and
\begin{center}
$W(C_{k}[nK_{1}])\geq 2n+\frac{n\cdot k}{2}-1$.
\end{center}
\end{corollary}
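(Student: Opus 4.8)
The plan is to derive the corollary directly from Theorem \ref{mytheorem16} applied to $G=C_k$, so the first task is to check that $C_k$ is interval colorable and to pin down a lower bound on $W(C_k)$. Since $k$ is even, $C_k$ is a connected $2$-regular bipartite graph, so Corollary \ref{mycorollary1} gives $C_k\in\mathfrak{N}$ (with $w(C_k)=2$). Theorem \ref{mytheorem16} then yields $C_k[nK_1]\in\mathfrak{N}$ at once, settling the membership assertion.

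For the numerical bound, Theorem \ref{mytheorem16} supplies $W(C_k[nK_1])\geq (W(C_k)+1)\cdot n-1$, so it suffices to show $W(C_k)\geq \frac{k}{2}+1$. I would establish this by an explicit construction. Label the edges of $C_k$ consecutively $e_1,e_2,\ldots,e_k$ around the cycle, and color them so that the colors first ascend $1,2,\ldots,\frac{k}{2}+1$ and then descend $\frac{k}{2},\frac{k}{2}-1,\ldots,2$. Since $k$ is even this pattern uses exactly $k$ edges (namely $\frac{k}{2}+1$ in the ascending stretch and $\frac{k}{2}-1$ in the descending one) and closes up consistently: every two consecutive edges, which share a vertex and give the only degree-$2$ interval constraint in a cycle, receive colors differing by exactly $1$, including the wrap-around pair $e_k$ (colored $2$) and $e_1$ (colored $1$). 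As all colors $1,\ldots,\frac{k}{2}+1$ occur, this is an interval $\left(\frac{k}{2}+1\right)$-coloring of $C_k$, whence $W(C_k)\geq \frac{k}{2}+1$.

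Combining the two steps, $W(C_k[nK_1])\geq \left(\frac{k}{2}+1+1\right)n-1=\frac{n\cdot k}{2}+2n-1=2n+\frac{n\cdot k}{2}-1$, which is exactly the claimed inequality. The only point that needs genuine care is the verification that the ascending--descending coloring is a bona fide interval coloring; everything hinges on the evenness of $k$ making the two ends meet with a color difference of $1$. Once that is checked, the remainder is a direct substitution of $W(C_k)\geq \frac{k}{2}+1$ into the bound furnished by Theorem \ref{mytheorem16}, so I expect no further obstacle.
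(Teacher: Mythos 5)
Your proposal is correct and is exactly the derivation the paper intends: the corollary is stated as an immediate consequence of Theorem \ref{mytheorem16} applied to $G=C_{k}$, using $C_{k}\in\mathfrak{N}$ for even $k$ and the bound $W(C_{k})\geq \frac{k}{2}+1$, which your ascending--descending coloring correctly establishes. No issues.
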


\begin{corollary}
\label{mycorollary8} (Kamalian, Petrosyan \cite{b23}) Let
$k=p2^{q}$, where $p$ is odd and $q\in \mathbf{N}$. Then
$K_{k}[nK_{1}]\in \mathfrak{N}$ and
\begin{center}
$W(K_{k}[nK_{1}])\geq (2k-p-q)\cdot n - 1$.
\end{center}
\end{corollary}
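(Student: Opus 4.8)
The plan is to derive this corollary directly from Theorem \ref{mytheorem16} applied to $G = K_k$, so the entire task reduces to two points: checking that $K_k \in \mathfrak{N}$, and feeding a sufficiently strong lower bound for $W(K_k)$ into the estimate $W(G[nK_1]) \geq (W(G)+1)\cdot n - 1$ supplied by that theorem.

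First I would observe that since $q \in \mathbf{N}$ we have $q \geq 1$, so $k = p2^q$ is even. A complete graph on an even number of vertices is $(k-1)$-regular with $\chi'(K_k) = k-1 = \Delta(K_k)$, hence by part (1) of Theorem \ref{mytheorem1} we get $K_k \in \mathfrak{N}$. Theorem \ref{mytheorem16} then immediately yields $K_k[nK_1] \in \mathfrak{N}$ for every $n \in \mathbf{N}$, settling the membership claim.

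The heart of the matter is to bound $W(K_k)$ from below by $2k - p - q - 1$, after which Theorem \ref{mytheorem16} gives $W(K_k[nK_1]) \geq (W(K_k)+1)\cdot n - 1 \geq (2k-p-q)\cdot n - 1$, exactly as required. To obtain the bound on $W(K_k)$ I would invoke Theorem \ref{mytheorem7}, whose hypothesis concerns a complete graph on an \emph{even} number $2n'$ of vertices. The key step is to re-index: writing $k = 2n'$ with $n' = p2^{q-1}$, which is legitimate precisely because $q \geq 1$, the factorization $n' = p2^{q-1}$ has $p$ odd and exponent $q - 1 \geq 0$. Applying Theorem \ref{mytheorem7} with these parameters gives $W(K_k) = W(K_{2n'}) \geq 4n' - 2 - p - (q-1)$, and since $4n' = 2k$ this simplifies to $2k - p - q - 1$, which is what the combination above needs.

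The only real obstacle is this re-indexing, and it is mild: Theorem \ref{mytheorem7} is phrased for $K_{2n}$ with $n = p2^q$, so the number of vertices there carries $2$-adic valuation $q+1$, whereas here $K_k$ has $k = p2^q$ of valuation $q$. Matching the two parametrizations forces the exponent appearing in Theorem \ref{mytheorem7} to be $q-1$, which is nonnegative precisely because $q \geq 1$; this is the single side condition that must be verified, and everything else is a one-line substitution. As a sanity check, the same scheme recovers Corollary \ref{mycorollary7}, where one instead inserts the lower bound $W(C_k) \geq \frac{k}{2}+1$ for even $k$ into $W(G[nK_1]) \geq (W(G)+1)\cdot n - 1$.
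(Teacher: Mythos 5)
Your proposal is correct and matches the paper's intended derivation: the corollary is stated as an immediate consequence of Theorem \ref{mytheorem16} applied to $G=K_k$, using $\chi'(K_k)=k-1$ for even $k$ to get $K_k\in\mathfrak{N}$ and the re-indexed bound $W(K_k)\geq 2k-p-q-1$ from Theorem \ref{mytheorem7}. Your arithmetic and the side condition $q\geq 1$ are both handled correctly.
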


Now we show that $G[H]\in \mathfrak{N}$ if $G,H\in \mathfrak{N}$ and
$H$ is regular.

\begin{theorem}
\label{mytheorem17} If $G,H\in \mathfrak{N}$ and $H$ is an
$r$-regular graph, then $G[H]\in \mathfrak{N}$. Moreover, if $\vert
V(H)\vert= n$, then $w(G[H])\leq w(G)\cdot n + r$ and $W(G[H])\geq
W(G)\cdot n + r$.
\end{theorem}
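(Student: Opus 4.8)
The plan is to combine the two ideas already developed in the paper: the lexicographic‑by‑$nK_1$ construction of Theorem~\ref{mytheorem16} and the ``blow‑up'' trick using $K_2\otimes H$ from Theorems~\ref{mytheorem13} and~\ref{mytheorem14}. Write $V(G)=\{u_1,\ldots,u_m\}$, $V(H)=\{v_1,\ldots,v_n\}$, and realize $G[H]$ on the vertex set $\{w_j^{(i)}\mid 1\le i\le m,\,1\le j\le n\}$, where an edge joins $w_p^{(i)}$ and $w_q^{(j)}$ whenever $(u_i,u_j)\in E(G)$ (for all $p,q$), or whenever $i=j$ and $(v_p,v_q)\in E(H)$. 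Thus $G[H]$ decomposes into the ``inter‑column'' edges, which form a copy of $G[nK_1]$, together with $m$ disjoint copies $H_1,\ldots,H_m$ of $H$ sitting over the individual vertices $u_i$.

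First I would handle the inter‑column part exactly as in Theorem~\ref{mytheorem16}: taking an interval $t$‑coloring $\alpha$ of $G$, color each edge $(w_p^{(i)},w_q^{(j)})$ coming from $(u_i,u_j)\in E(G)$ by a formula of the shape $(\alpha((u_i,u_j))-1)\cdot n + f(p,q)$, where $f$ runs over a suitable residue pattern so that at a fixed vertex $w_p^{(i)}$ the colors contributed by one $G$‑edge form the full block of $n$ consecutive integers $\{(\alpha-1)n+1,\ldots,\alpha n\}$. This is the same ``each edge of $G$ becomes a block of $n$ colors'' mechanism used before, and it guarantees that around $w_p^{(i)}$ the inter‑column colors form the interval $[\,(\min S(u_i,\alpha)-1)n+1,\ \max S(u_i,\alpha)\cdot n\,]$. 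Next, since $H\in\mathfrak{N}$ is $r$‑regular, Theorem~\ref{mytheorem1} gives $\chi'(H)=\Delta(H)=r$, so each copy $H_i$ admits an interval $r$‑coloring; I would color $H_i$ properly using the $r$ colors $\max S(u_i,\alpha)\cdot n+1,\ldots,\max S(u_i,\alpha)\cdot n+r$, stacking the internal $H$‑edges immediately on top of the block already used by the inter‑column edges at that vertex. This is the direct analogue of part~(2) of the coloring in Theorem~\ref{mytheorem14}, but with the factor $(r+1)$ there replaced by $n=|V(H)|$ here.

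The verification has two parts. For a vertex $w_p^{(i)}$ I must check that the union of the inter‑column colors and the internal $H_i$‑colors is one contiguous interval with no repeats. The inter‑column colors occupy $[(\min S(u_i,\alpha)-1)n+1,\ \max S(u_i,\alpha)\cdot n]$, and the $r$ colors from $H_i$ occupy $[\max S(u_i,\alpha)\cdot n+1,\ \max S(u_i,\alpha)\cdot n+r]$, so the two blocks abut perfectly and their union is again an interval; distinctness within $H_i$ follows because its coloring is proper. The largest color used overall is $t\cdot n + r$, giving the interval $(t\cdot n+r)$‑coloring and hence $G[H]\in\mathfrak{N}$ with $w(G[H])\le w(G)\cdot n + r$; running the same construction from an interval $W(G)$‑coloring of $G$ yields $W(G[H])\ge W(G)\cdot n + r$.

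The main obstacle I expect is pinning down the explicit residue rule $f(p,q)$ for the inter‑column edges so that it is simultaneously a proper coloring and produces, at every endpoint, a full consecutive block of exactly $n$ colors whose \emph{top} value is deterministically $\max S(u_i,\alpha)\cdot n$; this alignment of the block's top with the starting point of the $H_i$ palette is precisely what makes the two pieces fit into a single interval. The $nK_1$ construction of Theorem~\ref{mytheorem16} already supplies such an $f$ (the split formula with the $p+q=n+1$ case), so the real work is only to confirm that its per‑vertex block lands in the claimed range and then to check the abutment; both are routine once the formula is transcribed.
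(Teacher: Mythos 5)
Your proof is correct and follows essentially the same route as the paper: decompose $G[H]$ into a copy of $G[nK_{1}]$ plus $m$ disjoint copies of $H$, color the inter-column edges by the block scheme of Theorem~\ref{mytheorem16} (using $\chi'(H)=\Delta(H)=r$ from Theorem~\ref{mytheorem1} for the copies of $H$), and attach an interval $r$-coloring of each $H_{i}$ flush against that vertex's block of $n\cdot|S(u_i,\alpha)|$ colors. The only, immaterial, difference is that the paper shifts all inter-column colors up by $r$ and places the palette of $H_{i}$ just \emph{below} the block, at $(\min S(u_{i},\alpha)-1)n+1,\ldots,(\min S(u_{i},\alpha)-1)n+r$, whereas you place it just \emph{above}, at $\max S(u_{i},\alpha)\cdot n+1,\ldots,\max S(u_{i},\alpha)\cdot n+r$ (as in Theorem~\ref{mytheorem14}); both give an interval $(tn+r)$-coloring and the stated bounds.
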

\begin{proof} Let $V(G)=\{u_{1},u_{2},\ldots,u_{m}\}$, $V(H)=\{v_{1},v_{2},\ldots,v_{n}\}$  and

\begin{center}
$V\left(G[H]\right)=\bigcup_{i=1}^{m}V^{i}(H)$, where
$V^{i}(H)=\{w_{j}^{(i)}|~1\leq j\leq n\}$,
\end{center}

\begin{center}
$E\left(G[H]\right)=\left\{\left(w_{p}^{(i)},w_{q}^{(j)}\right)|~\left(u_{i},u_{j}\right)\in
E(G)~and~p,q=1,2,\ldots,n\right\}\cup\bigcup_{i=1}^{m}E^{i}(H)$,
where
$E^{i}(H)=\left\{\left(w_{p}^{(i)},w_{q}^{(i)}\right)|~\left(v_{p},v_{q}\right)\in
E(H)\right\}$.
\end{center}

Let $\alpha$ be an interval $t$-coloring of the graph $G$ and
\begin{center}
$H_{i}=\left(V^{i}(H),E^{i}(H)\right)$ for $i=1,2,\ldots,m$.
\end{center}

Note that $\chi ^{\prime}(H)=\Delta(H)=r$ since $H\in \mathfrak{N}$
and $H$ is an $r$-regular graph. This implies that there exists an
interval $r$-coloring of the graph $H$.

Define an edge coloring $\beta$ of the graph $G[H]$ in the following
way:

\begin{description}
\item[(1)] for every $\left(w_{p}^{(i)},w_{q}^{(j)}\right)\in E(G[H])$

\begin{center}
$\beta\left(\left(w_{p}^{(i)},w_{q}^{(j)}\right)\right)= \left\{
\begin{tabular}{ll}
$r+\left(\alpha((u_{i},u_{j}))-1\right)\cdot n + p+q-1\pmod {n}$,\\~~~~~~~~~~~~~~~~~~~~~~~~if $p+q\neq n+1$,\\
$r+\alpha((u_{i},u_{j}))\cdot n$, if $p+q=n+1$,\\
\end{tabular}%
\right.$
\end{center}
where $1\leq i\leq m, 1\leq j\leq m,i\neq j, 1\leq p\leq n, 1\leq
q\leq n$.\\

\item[(2)] for $i=1,2,\ldots,m$, the edges of the subgraph $H_{i}$ we color properly with colors
\begin{center}
$\left(\min S\left(u_{i},\alpha\right)-1\right)\cdot n+1,\left(\min
S\left(u_{i},\alpha\right)-1\right)\cdot n+2,\ldots,\left(\min
S\left(u_{i},\alpha\right)-1\right)\cdot n+r$
\end{center}
\end{description}\

It can be verified that $\beta$ is an interval $(t\cdot n
+r)$-coloring of the graph $G[H]$. By the definition of $\beta$, we
have $w(G[H])\leq w(G)\cdot n + r$ and $W(G[H])\geq W(G)\cdot n +
r$. ~$\square$
\end{proof}

The Figure \ref{example4} shows the interval $9$-coloring $\beta$ of
the graph $K_{4}[K_{2}]$ described in the proof of Theorem
\ref{mytheorem17}.\\

\begin{figure}[h]
\begin{center}
\includegraphics[height=20pc,width=35pc]{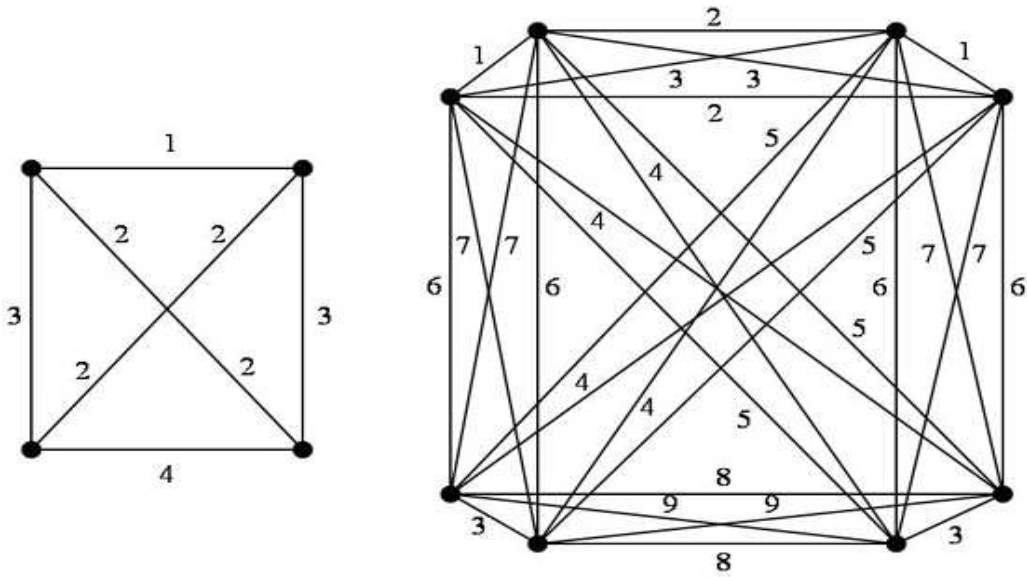}\
\caption{The interval $9$-coloring $\beta$ of the graph
$K_{4}[K_{2}]$.}\label{example4}
\end{center}
\end{figure}

\bigskip

\section{Problems}\

We conclude with the following problems on interval edge colorings
of products of graphs.

\begin{problem}
Are there graphs $G,H\notin \mathfrak{N}$, such that $G\times H\in
\mathfrak{N}$?
\end{problem}

\begin{problem}
Are there graphs $G,H\notin \mathfrak{N}$, such that $G\otimes H\in
\mathfrak{N}$?
\end{problem}

\begin{problem}
Are there graphs $G,H\notin \mathfrak{N}$, such that $G\boxtimes
H\in \mathfrak{N}$?
\end{problem}

\bigskip

\begin{acknowledgement}
We would like to thank the anonymous referees for useful
suggestions.
\end{acknowledgement}

\end{document}